\journal{Transportation Research Part C: Emerging Technologies}
\definecolor{strcolor}{rgb}{0.6, 0.2, 0.6}
\definecolor{commentcolor}{rgb}{0.3125, 0.5, 0.3125}
\definecolor{keycol}{rgb}{0, 0, 1}
\newtheorem{thm}{Result}
\newcommand {\bea}{\begin{eqnarray}}
	\newcommand {\eea}{\end{eqnarray}}
\def\blot{\quad \mbox{$\vcenter{ \vbox{ \hrule height.4pt
				\hbox{\vrule width.4pt height.9ex \kern.9ex \vrule width.4pt}
				\hrule height.4pt}}$}}
\gdef\AQ#1{}
\gdef\CQ#1{}
\begin{document}

\begin{frontmatter}

% Enter the full title:
\title{Semi-on-Demand Hybrid Transit Route Design with Shared Autonomous Mobility Services}

% Block of authors and their affiliations starts here:
\newcommand{\orcidicon}{\includegraphics[scale=0.06]{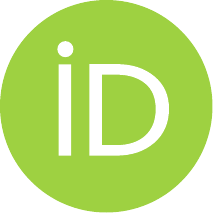}}
\author[1]{\orcidicon~\href{https://orcid.org/0000-0002-1252-320X}{Max T.M. Ng}}

\author[2]{\orcidicon~\href{https://orcid.org/0000-0003-3706-9725}{Florian Dandl}\corref{cor1}}
\cortext[cor1]{Corresponding author: \texttt{florian.dandl@tum.de}}

\author[1]{\orcidicon~\href{https://orcid.org/0000-0002-8443-8928}{Hani S. Mahmassani}\textsuperscript{\dag}}
%\tnotetext[]{\textsuperscript{\dag}Deceased}

\author[2]{\orcidicon~\href{https://orcid.org/0000-0003-3868-9571}{Klaus Bogenberger}}

\address[1]{Northwestern University Transportation Center, 600 Foster St, Evanston, IL 60208, USA}
\address[2]{Chair of Traffic Engineering and Control, Technical University of Munich, Arcisstraße 21, 80333 Munich, Germany}

\begin{abstract}
Shared Autonomous Vehicles (SAVs) enable transit agencies to design more agile and responsive services at lower operating costs. This study designs and evaluates a semi-on-demand hybrid route directional service in the public transit network, offering on-demand flexible route service in low-density areas and fixed route service in higher-density areas. We develop analytically tractable cost expressions that capture access, waiting, and riding costs for users, and distance-based operating and time-based vehicle costs for operators. Two formulations are presented for strategic and tactical decisions in flexible route portion, fleet size, headway, and vehicle size optimization, enabling the determination of route types between fixed, hybrid, and flexible routes based on demand, cost, and operational parameters. Analytical results demonstrate that the lower operating costs of SAVs favor more flexible route services. The practical applications and benefits of semi-on-demand feeders are presented with numerical examples and a large-scale case study in the Chicago metropolitan area, USA. Findings reveal scenarios in which flexible route portions serving passengers located further away reduce total costs, particularly user costs, whereas higher demand densities favor more traditional line-based operations. Current cost forecasts suggest smaller vehicles with fully flexible routes are optimal, but operating constraints or higher operating costs would favor larger vehicles with hybrid routes. The study provides an analytical tool to design SAVs as directional services and transit feeders, and tractable continuous approximation formulations for planning and research in transit network design.
\begin{keyword}
semi-on-demand \sep transit feeder \sep shared autonomous vehicle \sep transit design \sep flexible route
\end{keyword}
\end{abstract}

\end{frontmatter}

% Save the original footnote marker
\makeatletter
\let\orig@makefnmark\@makefnmark

% dagger in footnote; this has to come right after \end{frontmatter}
\makeatletter
\let\@makefnmark\relax
\makeatother
\footnotetext{\textsuperscript{\textdagger}Deceased}

% Restore original footnote behavior
\makeatletter
\let\@makefnmark\orig@makefnmark
\makeatother

\setcounter{footnote}{0}

\section{Introduction}
\subsection{Motivation}
Mobility, a cornerstone of socio-economic well-being, is often unevenly distributed due to public transit systems' inconsistent availability, accessibility, and quality. One of the key hurdles limiting transit access outside city centers is the first-mile-last-mile problem --- from a user perspective, public transit is unattractive if reaching it requires walking for an excessive distance from one's origin or to the final destination. From an operator perspective, line-based transit operations are often cost-inefficient in low-density areas with limited ridership. Taxis and rideshare services may plug this gap, but are not economical nor socially desirable from a congestion or environmental standpoint. Furthermore, long-term trends, such as suburbanization, and more recent trends associated with the COVID-19 pandemic, such as reduced service and increased prevalence of remote/hybrid work \citep{tahlyan_for_2022}, have greatly reduced transit ridership and funding. Poor accessibility often deters travelers from using transit, leaving private vehicles as their sole feasible mobility option, despite its higher costs for users (fuel and maintenance), society (delay due to congestion and space requirements for parking), and the environment (15\% of global carbon emissions are contributed by road transport \citep{ritchie_cars_2020}) relative to transit.

Shared Autonomous vehicles (SAVs) are one of the emerging technologies which could alleviate these mobility concerns by offering convenient point-to-point journeys \citep{frei_flexing_2017}. However, this convenience is predicted to induce demand, increasing traffic volume, emissions, and congestion if transit options are not also improved \citep{xu_privately_2019}. This work addresses this challenge by proposing the deployment of SAVs on semi-on-demand hybrid routes, designed to bridge the first-mile-last-mile gap in scenarios delineated herein. Such a system balances the economies of scale of fixed route public transit with the accessibility of flexible route taxis or SAVs, while taking advantage of the lower operating costs of SAVs \citep{tirachini_economics_2020}. Addressing the first-mile-last-mile problem can expand transit coverage to provide seamless door-to-door transportation options to wider populations in a cost-effective, green, and equitable manner. 

\subsection{Problem Description}
We consider a semi-on-demand hybrid route scheduled service along a corridor (Figure~\ref{fig:feeder_illustration}). Each SAV is dispatched on a regular headway, first running on a fixed route
\footnote{A constant stop spacing is assumed for fixed routes in continuous approximation. Optimizing stop spacing for fixed routes is a separate non-trivial problem \citep[e.g.,][]{pandey_bus_2024}, but can be incorporated in the joint optimization framework described later in Section~\ref{sec:op_x_f_s_h_b}.}
for areas close to a train station/downtown (with more concentrated demand), and then, instead of along a fully fixed route, picking up and dropping off passengers on demand in the flexible route portion (with pre-determined length  $x_f$) for areas further from a train station/downtown (with more scattered demand). The demarcation of fixed and flexible service areas by $x_f$ on a particular route is a design variable at a service planning level. While pre-determining $x_f$ for each route reduces real-time dynamic flexibility, it provides a predictable service structure, enabling riders to clearly understand whether to access a fixed stop or await on-demand pickup.

\begin{figure}[htb]
  \centering
  \includegraphics[width=6.5in]{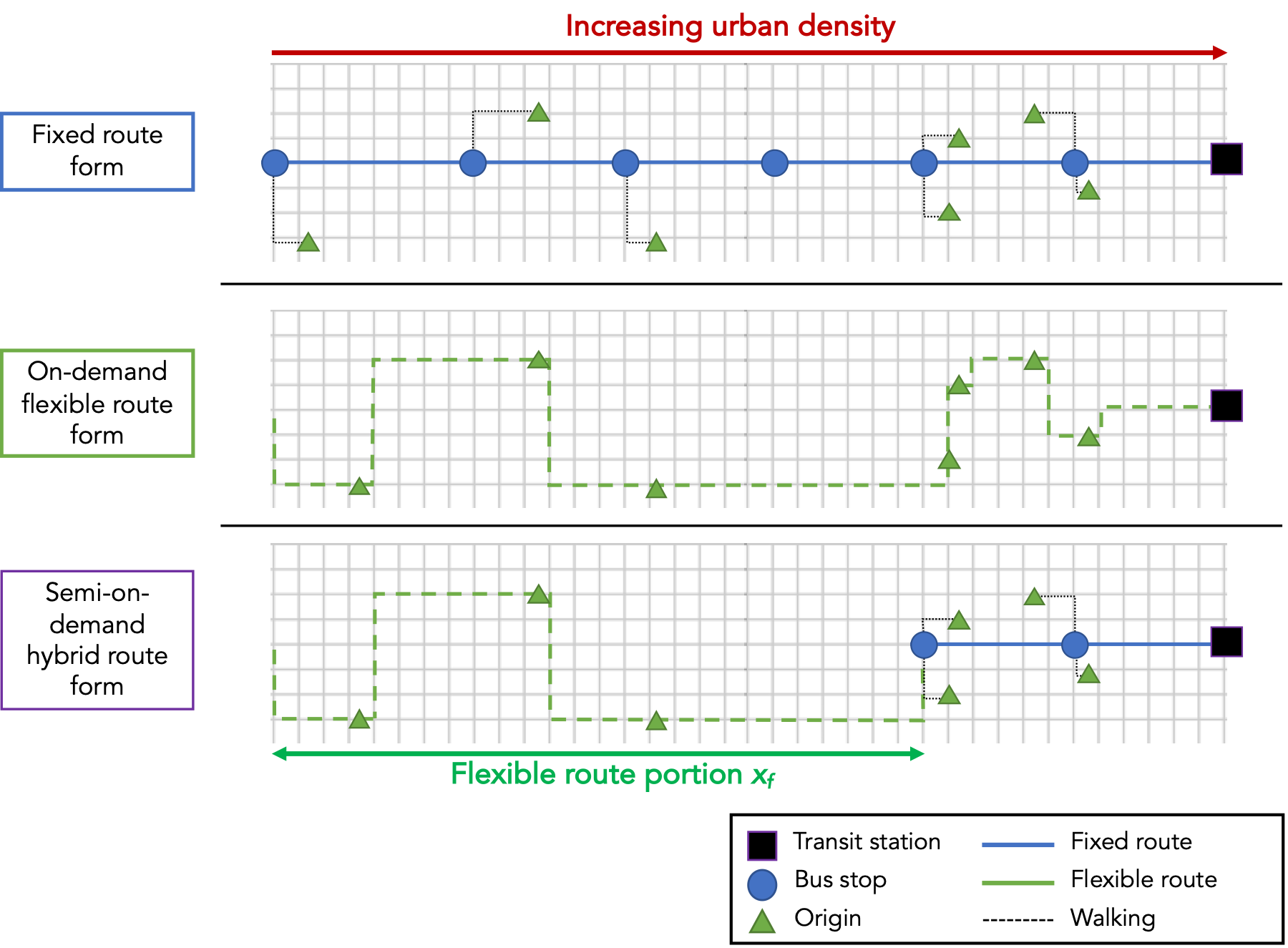}
  \caption{Illustration of Fixed Route, On-demand Flexible Route, and Semi-on-Demand Hybrid (Fixed/Flexible) Route as a Feeder Service}
  \label{fig:feeder_illustration}
\end{figure}

This study focuses on scenarios of directional demand in a relatively narrow corridor of a grid network, such as existing feeder service to train stations or commuter routes from suburbs to downtown. Therefore, SAVs serve all flexible pick-ups/drop-offs sequentially along the traveling direction. While our derivation and results cater to station/downtown-bound services (many-to-one), they equally apply to outbound services (one-to-many). \footnote{Simultaneous many-to-many operations (serving random origins to random destinations within the corridor) would require more complex routing logic and approximation (Volakakis et al. 2023), which would limit the insights drawn from closed-form solutions and is outside the scope of the directional continuous approximation approach in this study.}
% This footnote seems to break the format.

We solve for the optimal design variables (flexible route portion $x_f$, fleet size $s$, headway $h$, and vehicle size $b$) analytically and numerically to minimize the total generalized costs that cover users (access, waiting, and riding) and operators (vehicle mileage and fleet requirement) at the route design level. The setting of $x_f$ leads to trade-offs between saving in passengers' access time and additional detours required for both passengers on board and vehicles in the flexible route portion. Assuming known distributions of stochastic and inelastic demand for a route, this study approximates the average detours required in the flexible route portion, thereby optimizing for average system performance (rather than real-time routing for specific instances). The effects of design variables and, more generally, the performance of hybrid route form compared to fixed and flexible routes are investigated with respect to varying parameters of demand, cost, and operations.

\subsection{Objectives and Contributions}

We present two cost formulations for the semi-on-demand hybrid route service: one strategic and one tactical. The strategic formulation optimizes the flexible route portion and fleet size, capturing the additional vehicle requirement for possible detours of a flexible or hybrid route. In addition to these two variables, the tactical formulation optimizes headway for new route planning to balance the effects on the waiting time and vehicle capital cost. It also considers vehicle types (sizes) in the optimization, resulting in a comprehensive model to support network planning, select vehicle and fleet sizes simultaneously, and design flexible route portions and headways.

This paper provides three key contributions. 
\begin{itemize}
    \item Conceptually, we investigate the applicability and benefits of a semi-on-demand hybrid route form with general cost formulations that also cover fixed and flexible route forms. This delineates the conditions under which each of the three route forms is optimal, supplemented with tractable cost approximations. 
    \item Methodologically, we derive an analytical approach to determine the optimal route form and, for hybrid routes, the optimal flexible route portion and fleet size, without the need for computationally expensive simulation-based optimization in initial planning stages. We also conduct joint numerical optimization with the headway and vehicle size. These allow efficient transit service design and sensitivity study with respect to demand, geospatial, service, and cost parameters. 
    \item From an application standpoint, our analytical derivations indicate that the anticipated lower operating costs of SAVs make more flexible services economically advantageous. We demonstrate the benefits and applications of semi-on-demand feeders with numerical examples and a large-scale real-world case study in the Chicago metropolitan area, USA. The analytical formulation provides transit agencies with a tool to assess the benefits of flexible/hybrid/fixed routes for existing routes and jointly consider headway, fleet size, and vehicle size for investment decisions.
\end{itemize}

The paper is structured as follows: Section~\ref{sec:lit_review} reviews relevant literature, focusing on flexible transit service design and SAVs. We then introduce the two mathematical formulations for costs and derive optimal values for the respective decision variables in Section~\ref{sec:math_form}. To illustrate these formulations, we present two numerical examples in Section~\ref{sec:num_eg} --- the first focuses on the flexible route portion and fleet size, and the second incorporates headway and vehicle size. Subsequently, a city-scale case study of feeder services in Chicago is discussed in Section~\ref{sec:case_study}. Finally, we conclude the paper with a summary of the findings, limitations, and future research directions in Section~\ref{sec:concl}.

\section{Background}\label{sec:lit_review}

\subsection{Flexible Transit Route Design}

Previous research efforts mainly focused on flexible/adaptive transit route designs differentiating usage of fixed route and demand-responsive transit (DRT) services. \citet{chang_optimization_1991} analytically compared fixed and one-stop flexible service and optimized fleet and vehicle sizes. \citet{quadrifoglio_methodology_2009} and \citet{ li_feeder_2010} designed analytical models to identify the conditions for switching between demand-responsive and fixed route policies that maximize service quality. \citet{nourbakhsh_structured_2012} analyzed flexible route transit systems in a corridor with continuous approximation. \citet{rich_fixed_2023} compared fixed route and demand-responsive methods as feeders for light rail transit using agent-based simulation. Other studies looked into parallel flexible routes \citep{chen_analysis_2017}, dynamic switching between flexible and fixed route \citep{sayarshad_optimizing_2020}, electric and autonomous fleet management \citep{bongiovanni_electric_2019, bongiovanni_machine_2022}, and joint design of transit and DRT with queuing network model \citep{liu_mobility_2021} and path-based network optimization \citep{steiner_strategic_2020}. More recently, \citet{martinez_mori_value_2023} showed mathematically the value of flexibility brought by DRT. \citet{calabro_adaptive_2023} chose a fixed route or DRT feeder and designed the trunk service accordingly. For other recent advances in demand-responsive systems, see the survey by \citet{vansteenwegen_survey_2022}. Overall, these studies considered decisions to deploy flexible route or fixed route services. In contrast, this study investigates a hybrid route model, conceptualized as a continuous spectrum between fully fixed and fully flexible services, defined by the length of its flexible route portion.

Research on semi-flexible transit systems involved various optimization strategies (see a survey by \citet{errico_survey_2013} on earlier planning efforts). \citet{qiu_dynamic_2014} used an analytical model and simulation to decide which curb-to-curb stops are served, aiming to minimize user costs. \citet{galarza_montenegro_large_2021} proposed a demand-responsive feeder system with mandatory and optional bus stops. \citet{zheng_benefits_2019} introduced a blend of flexible and fixed routes via meeting points. \citet{leffler_simulation_2021} investigated on-demand operational policies for autonomous vehicles, showing improved waiting times on trunk routes in particular. \citet{mishra_effect_2023} optimized service headway and curb-to-curb service requests using non-dominated sorting to generate a Pareto set. For stochastic approaches, \citet{rambha_adaptive_2016} formulated an adaptive routing problem as a finite horizon Markov decision process. \citet{li_frequency-based_2023} looked into a stochastic problem of zonal on-demand flexible bus routes. \citet{silva_-demand_2022} proposed a Markovian continuous approximation-based semi-flexible system model on a single bus operation. Other similar service models include "customized bus" with demand-based pre-designed routes \citep{huang_two-phase_2020,abdelwahed_balancing_2023}, hybrid first-mile-last-mile service between shuttle and transportation network companies (TNCs) \citep{grahn_optimizing_2022}, and demand-adaptive system \citep{errico_single-line_2021}. While the previously discussed literature explored a wide range of flexible transit services, this study focuses on the proposed semi-on-demand routes, which pre-design portions of flexible and fixed routes for consistent service patterns. It combines the economies of scale and predictability of scheduled fixed routes and the door-to-door convenience of flexible routes. Additionally, most previous efforts relied on a mix of simulation, heuristics, or mixed-integer linear programming, often limiting intuitive sensitivity analyses with respect to key design parameters. This study provides an analytical approach to total cost minimization for delineating the conditions of the optimal route form (fixed/hybrid/flexible) with the optimal flexible route portion and fleet size.

From the perspective of ride-pooling, research have looked into incorporating meeting points (or virtual stops to minimize detours and improve operational efficiencies. Studies have shown that introducing meeting points can significantly increase matching rates and mileage savings \citep{stiglic_benefits_2015}, and optimizing these flexible boarding locations can reduce fleet distance traveled while maintaining service levels \citep{engelhardt_benefits_2021}. While this study focuses on the distinction between fixed and flexible routes and assumes the meeting points as the stops on the fixed routes (high-density corridor), the integration of such aggregation strategies, similar to the ``bus tubes'' with virtual stops discussed by \citet{shen_design_2021} or the high-coverage hubs proposed by \citet{jung_high-coverage_2011}, would enable more flexible route portions by reducing the marginal cost of detours.

\subsection{Shared Autonomous Mobility Services in Public Transit Systems}

The autonomous driving capability of SAVs leads to lower operating costs \citep{becker_impact_2020, horl_fleet_2019, narayanan_shared_2020}, while the shared element of SAVs improves service quality and operational efficiency \citep{hyland_operational_2020}. Their potential synergies as part of the public transit systems are highlighted by \citet{shen_integrating_2018}, \citet{salazar_intermodal_2020}, and \citet{cortina_fostering_2023} in case studies of Singapore, New York, and Lyon. \citet{badia_design_2021} designed feeder systems with SAVs and studied costs with continuous approximation. \citet{gurumurthy_first-mile-last-mile_2020} demonstrated the potential of SAVs as feeders for public transit in Austin, proposing fare benefits for transit users to increase transit coverage and reduce walking distances. \citet{levin_linear_2019} looked into the optimal integration of SAVs and transit with continuous approximation and linear programming. Transit was only used when it reduced total system travel time. \cite{luo_multimodal_2021} optimized mobility-on-demand service flow, transit frequency, and pricing jointly. \cite{ng_redesigning_2024, ng_joint_2025} redesigned existing multimodal transit networks with SAVs as point-to-point and feeder services. Some research (e.g., \citet{zhang_modular_2020, liu_improving_2021, tian_planning_2022}) investigated an evolving idea of autonomous modular transit and their roles in the overall network.

A contrasting body of research evaluated SAVs as replacements for existing transit systems. \citet{sieber_improved_2020} proposed that autonomous mobility on demand (AMoD) could potentially replace rural trains in Switzerland, analyzing the difference in costs and service levels. \citet{ng_autonomous_2023} and \citet{ volakakis_city-wide_2023} showed around 20\% performance improvement by replacing fixed bus routes with flexible routes in suburbs. \citet{mo_competition_2021} analyzed the impact of dynamic adjustable supply strategies and regulations in SAV fleet size and transit headway changes on system efficiency. \citet{cao_autonomous_2019} proposed stop-skipping with autonomous shuttles and evaluated the number of extra vehicles required.

Lastly, several researchers have sought to optimize SAV design variables, such as vehicle size \citep{alonso-mora_-demand_2017} and fleet size \citep{pinto_joint_2020, dandl_regulating_2021}. \citet{pinto_joint_2020} highlighted the need to consider waiting time in optimization, presenting a bi-level mathematical programming formulation and solution approach. \citet{sadrani_optimization_2022} optimized autonomous bus service frequency and vehicle size. Similar to a previous work \citep{ng_autonomous_2023} which assessed the attractiveness of flexible route operation with autonomous minibuses in suburbs, this study proposed a generalized framework to optimize the portion of flexible route with fleet size, headway, and vehicle size. Specifically, our methodology determines the optimal division between fixed and flexible route segments along this continuous spectrum of hybrid routes.

\section{Mathematical Formulations}\label{sec:math_form}
This section consists of two main models. We start with the optimization of the flexible route portion $0 \leq x_f \leq L_x$ ($L_x$ is the total route length in the x-direction in km.) and fleet size $s>0$ (in vehicle, or veh) subject to a fixed headway $H$ (in h), which allows us to set $s$ as a function of $x_f$. We discuss the model first with a general demand distribution, followed by specific forms under uniform and triangular distributions. It is worth noting that $x_f=0$ implies a fixed route, $0<x_f<L_x$ a hybrid route, and $x_f=L_x$ a flexible route. Next, we relax the fixed headway assumption and set it as a function $h(x_f,s)$ in optimization. This accounts for the trade-off between waiting times and extra vehicle costs. We also generalize vehicle sizes where the previously constant operational and vehicle costs are functions of vehicle sizes under a capacity requirement constraint.

The general assumptions throughout this section are as follows:
\begin{enumerate}

\item We illustrate the service in a grid network with route running direction as the x-axis and perpendicular detours as the y-axis. Vehicles travel along the x-direction and stop by demand points in sequence, i.e., there is no backtracking.
\item The total dwell time and layover time (at the transit station) are constant; vehicles travel at a constant speed $V_d$ (in km/h).
\item Demand is considered static within a continuous planning horizon and is assumed to be serviceable by the planned capacity.\footnote{This is further addressed by a capacity buffer $\rho$ in Section~\ref{sec:op_x_f_s_h_b}.}
\item All travelers are individuals; group size is not considered other than as a collection of independent travelers.
\end{enumerate}

The notation is summarized in Table~\ref{tab:notation}. Variables are denoted as small Roman characters (e.g., $x_f$, $s$, $b$, $h$) and constants as capital Roman characters (e.g., $L_x$, $H$), except for cost coefficients $\gamma$ and total demand $\Lambda$. The demand density function is $f(x) \geq 0, \forall x\in[0,L_x]$ (in passenger/km or pax/km), with its derivative as $f'(x)$ and cumulative density function as $F(x) = \int_0^x f(\tilde{x}) d\tilde{x}$ (in pax). The total demand is denoted as $\Lambda = F(L_x)$ (in pax). Demand and costs are on a per-hour basis, for example, $\Lambda$ in passengers per hour (pax/h) and $c$ in dollars per hour (\$/h).

We conventionally go from flexible route to fixed route along the x-axis (see Figure~\ref{fig:feeder_illustration}; this generally suggests going from lower- to higher-density areas but imposes no restrictions on demand distribution). Demand along the y-axis is also assumed to follow a general distribution, resulting in a mean fixed route access time $\overline{t_a}$ (in h) and a mean detour $\overline{d_{d,y}}$ (in km), which is the average y-directional distance between consecutive request points along the x-direction.\footnote{For continuous approximation of $\overline{d_{d,y}}$ using mean absolute difference under uniform and normal distributions, see \citet{ng_autonomous_2023}).}

\begin{table}[htb]
\centering
\caption{Summary of Notation}
\label{tab:notation}
\begin{tabular}{|l|l|l|}
\hline
\textbf{Symbol}      & \textbf{Description}                                 & \textbf{Unit} \\ \hline
$\gamma_a$           & Access cost factor                                   & -             \\ \hline
$\gamma_o$           & Operating cost                                       & \$/km         \\ \hline
$\gamma_t$           & Value of time                                        & \$/h          \\ \hline
$\gamma_v$           & Vehicle cost                                         & \$/veh-h      \\ \hline
$\gamma_w$           & Waiting cost factor                                  & -             \\ \hline
$\Lambda$            & Total demand                                         & pax/h         \\ \hline
$\lambda$            & Demand density                                       & pax/km-h      \\ \hline
$\lambda_0$          & Maximum demand density (for triangular distribution) & pax/km-h      \\ \hline
$\rho$               & Capacity buffer over demand                          & -             \\ \hline
$\mathcal{B}$        & Set of vehicle sizes                                         & -           \\ \hline
$b$                  & Vehicle size                                         & pax           \\ \hline
$c$                  & Total cost                                           & \$/h          \\ \hline
$c_{o,x}$            & Operator cost (x-directional)                        & \$/h          \\ \hline
$c_{o,y}$            & Operator cost (y-directional)                        & \$/h          \\ \hline
$c_{t,x}$            & Riding cost (x-directional)                          & \$/h          \\ \hline
$c_{t,y}$            & Riding cost (y-directional)                          & \$/h          \\ \hline
$c_a$                & Access cost                                          & \$/h          \\ \hline
$c_v$                & Vehicle cost                                         & \$/h          \\ \hline
$c_w$                & Waiting cost                                         & \$/h          \\ \hline
$\overline{d_{d,y}}$ & Mean detour                                          & km            \\ \hline
$f(\cdot)$                  & Demand density function                              & pax/km-h      \\ \hline
$F(\cdot)$                  & Cumulative demand density function                   & pax/h         \\ \hline
$H$ or $h$           & Headway                                              & h             \\ \hline
$L_x$                & Total route length in the x-direction                & km            \\ \hline
$s$                  & Fleet size                                           & veh           \\ \hline
$\overline{t_a}$     & Mean fixed route access time                         & h             \\ \hline
$V_d$                & Vehicle speed                                        & km/h          \\ \hline
$x_f$                & Flexible route portion                               & km            \\ \hline
\end{tabular}
\end{table}

\subsection{Optimization of the Flexible Route Portion and Fleet Size given a Headway}\label{sec:op_x_f_s}
\subsubsection{General Form}

In this formulation, we assume a general demand distribution along the x-direction and a constant headway $H$ (in h) to maintain constant waiting time and route capacity, and ignore the fleet size integrality\footnote{Non-integer fleet size may be used to represent operational arrangement, e.g., interlining, where vehicles are shared across routes. Furthermore, the additional complexity induced by the integrality constraints may necessitate numerical optimization methods.}. We first approximate the y-directional total detour time $t_y(x_f)$ in Eq.~\ref{eq:t_y}, based on the number of detours, i.e., demand in the flexible route portion $F(x_f)$, multiplied by $H$ for each vehicle trip and the average detour time $\overline{d_{d,y}}/v_d$. Then, to simplify the minimization problem to univariate, the fleet size required $s$ can be expressed as a function of $x_f$ by considering the cycle time divided by the headway $H$ in Eq.~\ref{eq:s_fleet}, where $t_x$ is the x-directional travel time and $T_l$ is the layover time between one-way bus trips.

\begin{align}
\label{eq:t_y}
t_y(x_f) &= H \frac{\overline{d_{d,y}}}{V_d} F(x_f)
\\
\label{eq:s_fleet}
s(x_f) &= \frac{2}{H} \left( t_x + t_y (x_f) + T_l \right) 
\nonumber 
\\ 
&= \frac{2}{H} \left( \frac{L_x}{V_d} + H \frac{ \overline{d_{d,y}}}{V_d} F(x_f) + T_l \right)
\end{align}

We consider total cost from a societal perspective encompassing impacts on users and the operator with reference to formulations of \citet{newell_issues_1979} and \citet{ng_autonomous_2023}. The objective is to minimize the total costs $c(x_f)$ (in \$) as a function of the flexible route portion $x_f$. It is composed of travelers' costs: access cost $c_a(x_f)$ (for fixed route portion only), waiting cost $c_w$, riding costs (x-direction: $c_{t,x}$, and, for flexible route portion only, y-directional detour $c_{t,y}(x_f)$); and distance-based operator costs:  $c_{o,x}$ (in x-direction) and $c_{o,y}(x_f)$ (y-direction, for flexible route portion only) and vehicle cost $c_v(s)$ as a function of the fleet size $s>0$. The cost factors are the value of time $\gamma_t$ (in \$/h), access cost factor $\gamma_a$ and waiting cost factor $\gamma_w$ (in multiples of riding cost), as well as operating cost $\gamma_o$ (in \$/km) and vehicle cost $\gamma_v(s)$ (in \$/veh).

The access (walking) cost $c_a(x_f)$ is calculated in Eq.~\ref{eq:c_ac} as the product of the value of access time, $\gamma_t\gamma_a$, average access time $\overline{t_a}$, and number of passengers served in the fixed route portion $\Lambda - F(x_f)$. The waiting cost $c_w$ in Eq.~\ref{eq:c_wa} is a constant equal to the product of values of waiting time $\gamma_t\gamma_w$, total number of passengers $\Lambda$, and expected waiting time (half of headway $H$).

\begin{align}
c_a(x_f) &= \gamma_t \gamma_a \overline{t_a} (\Lambda - F(x_f))
\label{eq:c_ac}
\\
c_w &= \gamma_t \gamma_w \Lambda \frac{H}{2}
\label{eq:c_wa}
\end{align}

The riding cost is separated into x-directional travel and y-directional detours. For x-directional travel, each passenger getting on the bus at $x$ needs to travel a distance of $L_x-x$, resulting in the total riding distances as the integral $\int_0^{L_x} f(x) (L_x-x) dx$ for $f(x)$ passengers in an hour. The constant x-directional riding cost $c_{t,x}$ (hourly) is then its product with the value of time $\gamma_t$ divided by the vehicle speed $V_d$ in Eq.~\ref{eq:c_t_x}. For y-directional detour, each passenger getting on the bus at $x$ in the flexible portion rides the extra detours to pick up the remaining passengers between $x$ and $x_f$ in the flexible portion, i.e., $\int_{x}^{x_f} H f(\tilde{x}) d\tilde{x}$ (multiplied by $H$ (hours per vehicle) to convert hourly demand (passengers per hour) to the number of passengers per vehicle). This extra detour experienced by each passenger boarding at each $x \in [0, x_f]$ in the flexible route region is multiplied by the demand density $f(x)$, resulting in the double integral in Eq.~\ref{eq:c_t_y}. After simplification, it is apparent that the y-directional riding cost $c_{t,y}(x_f)$ is proportional to the square of the cumulative demand function $F(x_f)$.

\begin{align}
c_{t,x} &= \gamma_t \frac{1}{V_d} \int_0^{L_x} f(x) (L_x-x) dx 
\nonumber 
\\ 
&= \gamma_t \frac{1}{V_d} \int_0^{L_x} F(x) dx 
\label{eq:c_t_x}
\\
c_{t,y}(x_f) &= \gamma_t \int_{0}^{x_f} f(x) \int_{x}^{x_f} H f(\tilde{x}) d\tilde{x} dx \frac{\overline{d_{d,y}}}{V_d} 
\nonumber 
\\ 
&= \gamma_t H \frac{\overline{d_{d,y}}}{2V_d} F(x_f)^2 
\label{eq:c_t_y}
\end{align}

The distance-based operating cost is also separated into x- and y-directional components. The constant x-directional operating cost $c_{o,x}$ in Eq.~\ref{eq:c_o_x} is determined by the route length, divided by the headway $H$ to reflect the hourly frequency. In contrast, for the y-directional detour, $H$ is canceled out for $c_{o,y}(x_f)$ in Eq.~\ref{eq:c_o_y}, i.e., the hourly operating cost due to detours depends on the total number of detours in an hour but not the headway.

\begin{align}
c_{o,x} &= \gamma_o \frac{L_x}{H}
\label{eq:c_o_x}
\\
c_{o,y}(x_f) &= \gamma_o \overline{d_{d,y}} F(x_f)  
\label{eq:c_o_y}
\end{align}

The vehicle cost is calculated in Eq.~\ref{eq:c_v} through Eq.~\ref{eq:s_fleet} as directly proportional to the hourly vehicle cost  $\gamma_v$ (in \$/veh), to cover the marginal costs of additional vehicles.

\begin{align}
\label{eq:c_v}
c_v(x_f) &= \gamma_v s(x_f) = \gamma_v \frac{2}{H} \left( \frac{L_x}{V_d} + \frac{H \overline{d_{d,y}}}{V_d} F(x_f) + T_l \right)
\end{align}

The total cost $c(x_f)$ in Eq.~\ref{eq:c_gen} is the sum of all costs in Eq.~\ref{eq:c_ac}-\ref{eq:c_v}. It enables us to determine the optimal route form analytically with Result~\ref{thm:route_choice} based on geospatial, cost, and operational parameters.

\begin{align}
c(x_f) &= \gamma_t \gamma_a \overline{t_a} (\Lambda - F(x_f)) 
+ \gamma_t \gamma_w \Lambda \frac{H}{2} 
\nonumber \\
&\quad + \gamma_t \frac{1}{V_d} \int_0^{L_x} F(x_f) dx 
+ \gamma_t H \frac{\overline{d_{d,y}}}{2V_d} F(x_f)^2 
\nonumber \\
&\quad + \gamma_o \frac{L_x}{H} 
+ \gamma_o \overline{d_{d,y}} F(x_f) 
\nonumber \\
&\quad + \gamma_v \frac{2}{H} \left( \frac{L_x}{V_d} + \frac{H \overline{d_{d,y}}}{V_d} F(x_f) + T_l \right)
\label{eq:c_gen}
\end{align}

\begin{thm}
\label{thm:route_choice}
Given a fixed fleet size and a fixed headway $H$, the optimal route form that minimizes the total cost function $c(x_f)$ in Eq.~\ref{eq:c_gen} is one of the following:
\begin{enumerate}[label=(\alph*)]
    \item Fixed route if  $ \overline{t_a} / \overline{d_{d,y}} \leq \gamma_o / (\gamma_t \gamma_a) + 2 \gamma_v / (\gamma_t \gamma_a V_d)$;
    \item Flexible route if  $ \overline{t_a} / \overline{d_{d,y}} \geq H \Lambda / (\gamma_a V_d) + \gamma_o / (\gamma_t \gamma_a) + 2 \gamma_v / (\gamma_t \gamma_a V_d)$; or
    \item Hybrid route if $ \gamma_o / (\gamma_t \gamma_a) + 2 \gamma_v / (\gamma_t \gamma_a V_d)< \overline{t_a} / \overline{d_{d,y}} < H \Lambda / (\gamma_a V_d) + \gamma_o / (\gamma_t \gamma_a) + 2 \gamma_v / (\gamma_t \gamma_a V_d)$, in which case Eq.~\ref{eq:F_gen} determines the optimal cumulative demand within the flexible route portion $F(x_f^*)$ and subsequently the optimal flexible route length $x_f^*$.
\end{enumerate}

\begin{align}
F(x_f^*) &= \frac{1}{H} \left( \gamma_a \frac{V_d \overline{t_a}}{\overline{d_{d,y}}} - \frac{\gamma_o}{\gamma_t} V_d - 2\frac{\gamma_v}{\gamma_t}\right)
\label{eq:F_gen}
\end{align}
\end{thm}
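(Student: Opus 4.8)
The plan is to reduce the minimization of $c(x_f)$ over $x_f \in [0, L_x]$ to the minimization of a one-dimensional convex quadratic. First I would inspect each term of Eq.~\ref{eq:c_gen}: the waiting cost $c_w$, the x-directional riding cost $c_{t,x}$, the x-directional operating cost $c_{o,x}$, and the $L_x$- and $T_l$-parts of the vehicle cost are all constant in $x_f$; the remaining terms depend on $x_f$ only through $F(x_f)$, entering linearly (the access cost, the y-directional operating cost, and the $\overline{d_{d,y}}F(x_f)$ part of the vehicle cost) or quadratically (the y-directional riding cost $c_{t,y}$). Collecting these, $c(x_f) = A\,F(x_f)^2 + B\,F(x_f) + C$ with $A = \gamma_t H \overline{d_{d,y}}/(2V_d) > 0$, $B = \gamma_o \overline{d_{d,y}} + 2\gamma_v \overline{d_{d,y}}/V_d - \gamma_t \gamma_a \overline{t_a}$, and $C$ a constant collecting everything else.

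Next I would invoke the monotonicity of $F$. Since $f(x) \geq 0$ on $[0, L_x]$, the map $x_f \mapsto F(x_f)$ is continuous and non-decreasing with $F(0) = 0$ and $F(L_x) = \Lambda$, so its image is exactly the interval $[0, \Lambda]$. Hence minimizing $c(x_f)$ over $x_f \in [0, L_x]$ is equivalent to minimizing $g(u) := A u^2 + B u + C$ over $u \in [0, \Lambda]$. Because $A > 0$, $g$ is strictly convex with unconstrained minimizer $u^* := -B/(2A)$, and the constrained minimizer over $[0,\Lambda]$ is $\max\{0, \min\{u^*, \Lambda\}\}$.

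The trichotomy then follows directly. The minimizer is the left endpoint $u = 0$ (a fixed route, $x_f^* = 0$) exactly when $u^* \leq 0$, i.e. $B \geq 0$; dividing this inequality by the positive quantity $\gamma_t \gamma_a \overline{d_{d,y}}$ gives case (a). It is the right endpoint $u = \Lambda$ (a flexible route, $x_f^* = L_x$) exactly when $u^* \geq \Lambda$, i.e. $B \leq -2A\Lambda$; the same normalization yields case (b). Otherwise $0 < u^* < \Lambda$, the interior stationary point is optimal, and we obtain a hybrid route with $F(x_f^*) = u^* = -B/(2A)$; substituting the expressions for $A$ and $B$ and simplifying produces Eq.~\ref{eq:F_gen}, after which any $x_f^*$ attaining this value of $F$ (unique when $F$ is strictly increasing) is a minimizer.

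I do not anticipate a genuine obstacle here: the whole argument rests on the observation that the objective is quadratic in $F(x_f)$ and that the attainable values of $F(x_f)$ fill the interval $[0,\Lambda]$. The only points needing mild care are (i) checking that $\overline{d_{d,y}} > 0$ and the cost coefficients are positive, so that the normalizing division preserves the direction of each inequality and so that $A>0$ (strict convexity), and (ii) noting that when $F$ is merely non-decreasing (e.g.\ $f$ vanishes on a sub-interval) the optimal $x_f^*$ need not be unique, although the optimal $F(x_f^*)$ and the optimal cost are.
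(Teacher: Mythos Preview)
Your argument is correct and, at its core, rests on the same observation the paper exploits: $c(x_f)$ depends on $x_f$ only through $F(x_f)$, linearly except for the single quadratic term $c_{t,y}$, with positive leading coefficient. The packaging differs, however. The paper differentiates $c$ in $x_f$ directly, obtaining $c'(x_f)=f(x_f)\big[-\gamma_t\gamma_a\overline{t_a}+\gamma_t H\,\overline{d_{d,y}}F(x_f)/V_d+\gamma_o\overline{d_{d,y}}+2\gamma_v\overline{d_{d,y}}/V_d\big]$, then argues each case by sign analysis of $c'$ and a contradiction, and finally checks $c''(x_f^*)\ge 0$. Your change of variables $u=F(x_f)$ converts the problem into minimizing a strictly convex quadratic on $[0,\Lambda]$, from which the three cases drop out as the standard endpoint/interior trichotomy without any differentiation in $x_f$. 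This buys you two things: the argument is shorter, and it sidesteps a mild sloppiness in the paper's proof (the paper asserts $c'(x_f)>0$ on $(0,L_x]$ in case~(a), but if $f$ vanishes on a subinterval one only gets $c'(x_f)\ge 0$; your formulation handles this automatically, and your closing remark about non-uniqueness of $x_f^*$ when $F$ is flat is exactly the right caveat). Conversely, the paper's derivative computation makes the marginal trade-off interpretation (access savings versus detour costs per additional flexible passenger) slightly more visible.
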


\begin{proof}

\begin{enumerate}[label=(\alph*)]
    \item Fixed route: Assume that the minimum point $x_f^*>0$, so $\exists x_f\in(0,L_x]: c(x_f)<c(0)$. The condition can be rearranged as  $ \gamma_t \gamma_a \overline{t_a} \leq \gamma_o \overline{d_{d,y}} + 2 \gamma_v \overline{d_{d,y}} / V_d$ , so the first derivative in Eq.~\ref{eq:c'_gen} $c'(x_f)>0, \forall x_f \in (0,L_x]$ as $F(x_f)>0$. Therefore, $c(x_f) > c(0), \forall x_f \in (0,L_x]$. Contradiction shows that $x_f^* = 0$.
    \item Flexible route: Similar to Part (a), assume that the minimum point $x_f^* < L_x$, so $\exists x_f\in[0,L_x): c(x_f)<c(L_x)$. The condition can be rearranged as $ \gamma_t \gamma_a \overline{t_a} \geq \gamma_t H \Lambda \overline{d_{d,y}} / V_d + \gamma_o \overline{d_{d,y}} + 2 \gamma_v \overline{d_{d,y}} / V_d$, so $\gamma_t \gamma_a \overline{t_a} > \gamma_t H \overline{d_{d,y}} F(x_f) / V_d + \gamma_o \overline{d_{d,y}} + 2 \gamma_v \overline{d_{d,y}} / V_d, \forall x_f \in [0,L_x) $ as $\Lambda > F(x_f)$. Therefore, $c'(x_f) < 0, \forall x_f \in [0,L_x)$ in Eq.~\ref{eq:c'_gen}, suggesting $c(x_f) > c(L_x), \forall x_f \in [0,L_x)$. Contradiction shows that $x_f^* = L_x$.
    \item     Hybrid route: We minimize the total cost with respect to $x_f$ by considering the optimality condition where $c'(x)=0$ in Eq.~\ref{eq:c'_gen}, resulting in $F(x_f^*)$ in Eq.~\ref{eq:F_gen}. We note that $F(x_f) < F(x_f^*)$ implies $c'(x_f) < 0$ and $F(x_f) > F(x_f^*)$ implies $c'(x_f) > 0$. Additionally, the second derivative at this point is positive from Eq.~\ref{eq:c''_gen_op}, so the cost function is convex around the optimum point. Hence, $c(x_f^*)$ is minimal when $x_f^*$ satisfies Eq.~\ref{eq:F_gen}. Lastly, the case condition implies $0 < F(x_f^*) < \Lambda$, so $0<x_f^*<L_x$. 

\end{enumerate}

\begin{align}
c'(x_f) &= 
- \gamma_t \gamma_a \overline{t_a} f(x_f) 
+ \gamma_t  H \frac{\overline{d_{d,y}}}{V_d} f(x_f) F(x_f) 
\nonumber \\
&\quad + \gamma_o \overline{d_{d,y}} f(x_f) 
+ 2 \gamma_v \frac{\overline{d_{d,y}}}{V_d} f(x_f) 
\label{eq:c'_gen}
\\
c''(x_f) &= 
- \gamma_t \gamma_a \overline{t_a} f'(x_f) 
\nonumber \\
&\quad+ \gamma_t  H \frac{\overline{d_{d,y}}}{V_d} (f'(x_f) F(x_f) + (f(x_f))^2) 
\nonumber \\
&\quad + \gamma_o \overline{d_{d,y}} f'(x_f) 
+ 2 \gamma_v \frac{\overline{d_{d,y}}}{V_d} f'(x_f)
\label{eq:c''_gen}
\\
c''(x_f^*) &= \gamma_t H \frac{\overline{d_{d,y}}}{V_d} (f(x_f^*))^2 \geq 0
\label{eq:c''_gen_op}
\end{align}

\end{proof}

Eq.~\ref{eq:F_gen} offers an intuitive interpretation of the hybrid route design. It suggests that the first $F(x_f^*)$ passengers (starting from the low-density end of the route) should be served via flexible routing, regardless of the specific x-directional demand distribution $f(\cdot)$.\footnote{While the quantity of flexible demand $F(x_f^*)$ is determined by cost and operational parameters in Eq.~\ref{eq:F_gen}, the spatial extent $x_f^*$ depends on the demand density. For a denser demand distribution (larger $f(x)$), the cumulative demand threshold $F(x_f^*)$ is reached over a shorter distance, resulting in a smaller $x_f^*$. Conversely, lower demand density implies a longer flexible route portion, which aligns with the condition in Result~\ref{thm:route_choice}(b) where a smaller total demand $\Lambda$ favors fully flexible routes.} This threshold is reached when the marginal savings in access cost (walking time) for a passenger no longer outweigh the marginal costs incurred by the system (additional riding time for passengers on board and operational costs for the vehicle). The number increases with the vehicle speed $V_d$, access cost $\gamma_a \overline{t_a}$ (walking time), and value of time $\gamma_t$, and decreases with the headway $H$, mean detour $\overline{d_{d,y}}$, operating cost $\gamma_o$, and vehicle cost $\gamma_v$. This aligns with the trade-offs of savings in access cost with extra riding time and operational costs incurred by detours.

Furthermore, the expression explicitly highlights the role of vehicle technology. The term $\gamma_o / \gamma_t$ represents the ratio of operating cost to the value of time. High operating costs (typical of human-driven vehicles) increase the negative terms in Eq.~\ref{eq:F_gen}, potentially driving $F(x_f^*)$ to zero (favoring fixed route). Conversely, SAVs, which promise significantly lower $\gamma_o$, reduce this penalty, mathematically justifying a shift toward larger flexible route portions ($x_f^*$) or fully flexible services.

Simplified demarcation criteria for optimal route form in different design parameters are shown in Figure~\ref{fig:route_choice_chart}, as derived from Result~\ref{thm:route_choice} by assuming negligible effect of the flexible route on the fleet size $s$ and omitting vehicle cost ($\gamma_v=0$). The rearranged condition for fixed route $ \gamma_t \gamma_a \overline{t_a} \leq \gamma_o \overline{d_{d,y}}$ suggests that for each detour, the access cost saving is not greater than the extra operational cost brought. For flexible route, $ \gamma_t \gamma_a \overline{t_a} \geq \gamma_t H \Lambda \overline{d_{d,y}} / V_d + \gamma_o \overline{d_{d,y}}$ means that the access cost saving is not smaller than the sum of extra operational and riding cost (for every other passenger) brought by each detour. This also delineates the hybrid route case where the access cost saving is greater than the operational cost but not enough to cover the extra riding cost for all passengers. Compared to the previous results by \citet{ng_autonomous_2023} (shown by the gray dashes), Result~\ref{thm:route_choice} demarcates the previous marginal cases between fixed and flexible routes where the hybrid route would provide better services. Eq.~\ref{eq:F_gen} effectively means the first $F(x_f^*)$ passengers should be served with a flexible route, regardless of the specific x-directional demand distribution. 

\begin{figure}[htb]
  \centering
  \includegraphics[width=4in]{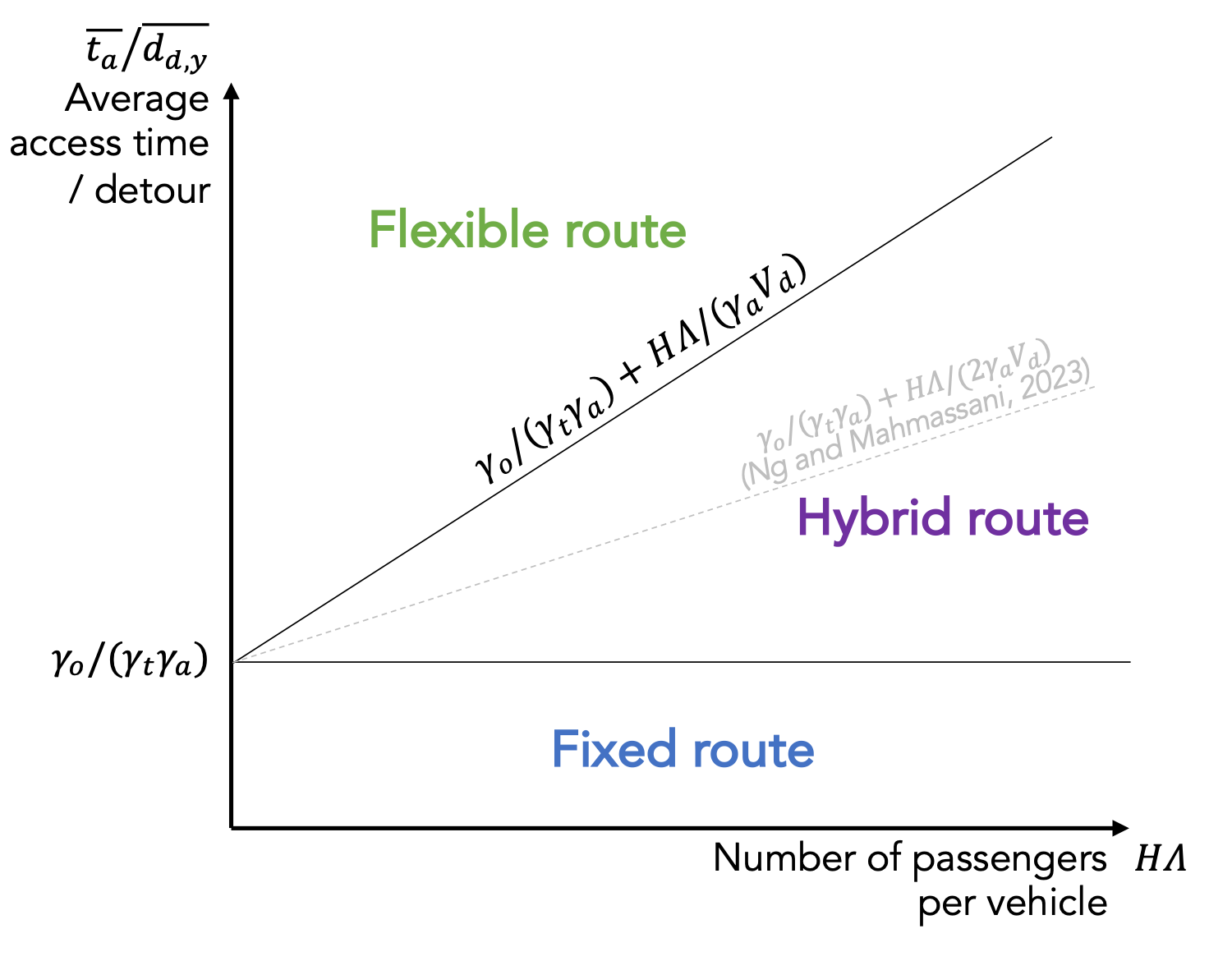}
  \caption{Illustration of Optimality Demarcation of Fixed, Hybrid, and Flexible Route from Result~\ref{thm:route_choice} (Ignoring Additional Vehicle Requirement)}
  \label{fig:route_choice_chart}
\end{figure}

On the effects of vehicle technology, visually in Figure~\ref{fig:route_choice_chart}, lower operating costs would shift the y-intercepts of fixed/hybrid and hybrid/flexible route demarcation down. This again suggests SAVs, with their lower operating cost compared to human drivers, favor offering more flexible or hybrid route services rather than traditional fixed routes.

Lastly, the vehicle cost $\gamma_v$ suggests that extra vehicles needed for the detours in flexible routes would elevate the cost and limit the extent of the flexible route at optimum. Determining the optimal fleet size conditioned on the optimal flexible portion leads to Result~\ref{cor:fleet_op}.

\begin{thm}
\label{cor:fleet_op}
For a hybrid route (with a fixed headway $H$), the optimal fleet size $s^*$ can be expressed in Eq.~\ref{eq:s_fleet_op}, which is independent of the x-directional demand distribution $f(x)$.
\begin{align}
\label{eq:s_fleet_op}
s^* &= \frac{2}{H} \left( \frac{L_x}{V_d} +  \gamma_a \overline{t_a} - \frac{\gamma_o}{\gamma_t} \overline{d_{d,y}} - 2\frac{\gamma_v}{\gamma_t} \frac{\overline{d_{d,y}}}{V_d} + T_l \right)
\end{align}
\end{thm}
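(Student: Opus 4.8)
The plan is to treat Result~\ref{cor:fleet_op} as an immediate consequence of the hybrid-route case of Result~\ref{thm:route_choice} together with the fleet-size formula in Eq.~\ref{eq:s_fleet}. The structural observation that makes this work is that $s$ depends on the x-directional demand distribution \emph{only} through the cumulative quantity $F(x_f)$: in Eq.~\ref{eq:s_fleet} the detour contribution enters as $H\,\overline{d_{d,y}}\,F(x_f)/V_d$, with nothing else referencing $f(\cdot)$. So once the optimal flexible portion has been pinned down by its cumulative demand $F(x_f^*)$, the optimal fleet size is obtained by plain substitution, and whatever shape $f$ has drops out.

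Concretely, I would proceed in three steps. First, invoke Result~\ref{thm:route_choice}(c): in the hybrid regime the minimizer $x_f^*$ is interior, $0<x_f^*<L_x$ (the case condition is exactly $0<F(x_f^*)<\Lambda$), and it satisfies the first-order condition, so $F(x_f^*)$ is given in closed form by Eq.~\ref{eq:F_gen}. Second, set $s^*=s(x_f^*)$ by plugging this value of $F(x_f^*)$ into Eq.~\ref{eq:s_fleet}. Third, simplify: the factor $H\,\overline{d_{d,y}}/V_d$ multiplying $F(x_f^*)$ cancels the leading $1/H$ and one power of $\overline{d_{d,y}}$ coming from Eq.~\ref{eq:F_gen}, so the detour term collapses to $\gamma_a\overline{t_a}-(\gamma_o/\gamma_t)\overline{d_{d,y}}-2(\gamma_v/\gamma_t)\overline{d_{d,y}}/V_d$, which is precisely the bracketed correction appearing in Eq.~\ref{eq:s_fleet_op}. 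The independence claim is then recorded directly: Eq.~\ref{eq:F_gen} writes $F(x_f^*)$ purely in terms of $\gamma_a,\gamma_o,\gamma_v,\gamma_t,V_d,\overline{t_a},\overline{d_{d,y}},H$, none of which encode the functional form of $f$, and since $s(\cdot)$ sees $f$ only via $F(x_f^*)$, the resulting $s^*$ inherits this independence.

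I do not expect a genuine analytic obstacle here; the result is essentially a substitution-and-simplify corollary rather than a standalone theorem. The only things worth being careful about are bookkeeping items: making sure the substitution is legitimate (i.e., that we really are in the hybrid regime so that an interior $x_f^*$ solving Eq.~\ref{eq:F_gen} exists, which the case hypothesis guarantees), and carrying the sign conventions of Eq.~\ref{eq:F_gen} through the cancellation of $H$ and $\overline{d_{d,y}}$ without error so that the three correction terms in Eq.~\ref{eq:s_fleet_op} come out with the signs shown. If desired, one can also remark that the fixed- and flexible-route endpoints are recovered by substituting $F(x_f^*)=0$ and $F(x_f^*)=\Lambda$ respectively into Eq.~\ref{eq:s_fleet}, which provides a sanity check on the hybrid expression.
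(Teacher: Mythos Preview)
Your proposal is correct and mirrors the paper's own proof, which simply states that Eq.~\ref{eq:s_fleet_op} follows by combining Eq.~\ref{eq:s_fleet} and Eq.~\ref{eq:F_gen}. Your three-step plan (invoke the hybrid first-order condition, substitute $F(x_f^*)$ into the fleet-size formula, simplify) is exactly this combination spelled out, with the independence-of-$f$ observation made explicit.
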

\begin{proof}
Eq.~\ref{eq:s_fleet_op} can be obtained by combining Eq.~\ref{eq:s_fleet} and \ref{eq:F_gen}. 
\end{proof}

The second term of Eq.~\ref{eq:s_fleet_op} reflects the balance between riders' access time and operator's cost. $s^*$ increases with access time $\overline{t_a}$ as it favors longer flexible routes that would require more vehicles, and decreases with operating cost $\gamma_o$ and y-directional detour $\overline{d_{d,y}}$, which makes it more costly to run flexible service. This aligns with the factors in Eq.~\ref{eq:F_gen} in Result~\ref{thm:route_choice}.

\subsubsection{Uniform Demand Distribution}
We now show specific examples of x-directional demand distribution. Assuming that the demand density follows a uniform distribution with density $\lambda$, i.e., $f(x) = \lambda$, we get $F(x) = \lambda x$ and $\Lambda = \lambda L_x$. $c(x_f)$ in Eq.~\ref{eq:c_gen} is then simplified in Eq.~\ref{eq:c_uni}.

\begin{align}
c(x_f) &= \gamma_t \gamma_a \overline{t_a} \lambda (L_x - x_f)  + \gamma_t \gamma_w \lambda L_x \frac{H}{2} 
\nonumber \\
&\quad + \gamma_t \frac{\lambda L_x^2}{2V_d}
+ \gamma_t \frac{\lambda^2}{2} H \frac{\overline{d_{d,y}}}{V_d} x_f^2
\nonumber \\
&\quad + \gamma_o \frac{L_x}{H} 
+ \gamma_o \lambda \overline{d_{d,y}} x_f  
\nonumber \\
&\quad + \gamma_v \frac{2}{H} \left(\frac{L_x}{V_d} + \frac{H \overline{d_{d,y}}}{V_d} \lambda x_f + T_l \right)
\label{eq:c_uni}
\end{align}

The optimal flexible route portion under a uniform distribution $x_{f,uni}^*$ in Eq.~\ref{eq:x_f_uni} is then obtained by solving Eq.~\ref{eq:F_gen} with $F(x_{f,uni}^*)=\lambda x_{f,uni}^*$.

\begin{align}
x_{f,uni}^* &= \frac{1}{\lambda H} \left( \gamma_a\frac{ V_d \overline{t_a}}{\overline{d_{d,y}}} - \frac{\gamma_o}{\gamma_t} V_d - 2\frac{\gamma_v}{\gamma_t} \right)
\label{eq:x_f_uni}
\end{align}

\subsubsection{Triangular Demand Distribution}
We may also assume a more likely scenario --- increasing demand density closer to the transit station or downtown (see Figure~\ref{fig:feeder_illustration}). For a triangular distribution with demand density $f(x)$ starting with $0$ and increasing linearly with $x$ to $\lambda_0$ at the end, $f(x) = \lambda_0 / L_x \cdot x$, $F(x) = \lambda_0 / 2L_x \cdot x^2$, and $\Lambda = \lambda_0L_x / 2$. The total cost function $c(x_f)$ and the optimal flexible route portion under a triangular distribution $x_{f,tri}^*$  can be obtained similarly in Eq. \ref{eq:c_tri} and \ref{eq:x_f_tri}.

\begin{align}
c(x_f) &= \gamma_t \gamma_a \overline{t_a} \frac{\lambda_0}{2} \left( L_x - \frac{x_f^2}{L_x} \right)  
\nonumber \\
&\quad + \gamma_t \gamma_w \frac{\lambda_0L_x}{2} \frac{H}{2} + \gamma_t \frac{\lambda_0 L_x^2}{6V_d}
+ \gamma_t H\frac{\overline{d_{d,y}}}{V_d} \lambda_0^2 \frac{x_f^4}{8L_x^2} 
\nonumber \\
&\quad + \gamma_o \frac{L_x}{H} 
+ \gamma_o \overline{d_{d,y}} \frac{\lambda_0 x_f^2}{2L_x}
\nonumber \\
&\quad + \gamma_v \frac{2}{H} \left( \frac{L_x}{V_d} + \frac{H \overline{d_{d,y}}}{V_d} \frac{\lambda_0 x_f^2}{2L_x} + T_l \right)
\label{eq:c_tri}
\\
x_{f,tri}^* &= \sqrt{\frac{2L_x}{\lambda_0 H} \left( \gamma_a \frac{V_d \overline{t_a}}{\overline{d_{d,y}}} - \frac{\gamma_o}{\gamma_t} V_d  - 2\frac{\gamma_v}{\gamma_t} \right) }
\label{eq:x_f_tri}
\end{align}
The comparison of the results under the two demand distributions leads to Result~\ref{cor:uni_tri}.

\begin{thm}
\label{cor:uni_tri}
For a hybrid route assuming the same total demand $\Lambda$, the optimal flexible portion under a triangular distribution $x_{f,tri}^*$ is always longer than that under a uniform distribution $x_{f,uni}^*$. Specifically, $x_{f,tri}^* = \sqrt{L_x x_{f,uni}^*}$.
\end{thm}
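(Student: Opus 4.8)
The plan is to collapse the two closed forms, Eq.~\ref{eq:x_f_uni} and Eq.~\ref{eq:x_f_tri}, onto a common quantity and then compare. First I would note that the bracketed factor
\[
K \;=\; \gamma_a \frac{V_d \overline{t_a}}{\overline{d_{d,y}}} \;-\; \frac{\gamma_o}{\gamma_t} V_d \;-\; 2\frac{\gamma_v}{\gamma_t}
\]
appears verbatim in both expressions and depends only on cost, speed, and geometry parameters --- not on the x-directional demand distribution. In the hybrid regime of Result~\ref{thm:route_choice}(c) one has $F(x_f^*) = K/H$ with $0 < F(x_f^*) < \Lambda$, so $K>0$ and both $x_{f,uni}^*$ and $x_{f,tri}^*$ are well defined and strictly positive. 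With this notation, Eq.~\ref{eq:x_f_uni} reads $x_{f,uni}^* = K/(\lambda H)$ and Eq.~\ref{eq:x_f_tri} reads $x_{f,tri}^* = \sqrt{2 L_x K/(\lambda_0 H)}$.

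Next I would invoke the ``same total demand'' hypothesis. Under the uniform profile $\Lambda = \lambda L_x$ and under the triangular profile $\Lambda = \lambda_0 L_x/2$; equating the two totals forces $\lambda_0 = 2\lambda$. Substituting this into $x_{f,tri}^*$ and simplifying,
\[
x_{f,tri}^* \;=\; \sqrt{\frac{2 L_x K}{\lambda_0 H}} \;=\; \sqrt{\frac{L_x K}{\lambda H}} \;=\; \sqrt{L_x \cdot \frac{K}{\lambda H}} \;=\; \sqrt{L_x\, x_{f,uni}^*},
\]
which is exactly the asserted identity. The strict inequality then follows from the hybrid-route bound alone: in case (c) of Result~\ref{thm:route_choice} we have $F(x_{f,uni}^*) = \lambda x_{f,uni}^* < \Lambda = \lambda L_x$, hence $0 < x_{f,uni}^* < L_x$. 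Multiplying by $x_{f,uni}^*>0$ gives $(x_{f,uni}^*)^2 < L_x\, x_{f,uni}^*$, and taking square roots yields $x_{f,uni}^* < \sqrt{L_x\, x_{f,uni}^*} = x_{f,tri}^*$.

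I do not expect a genuine obstacle here: the argument is one substitution followed by a monotonicity step. The two points requiring care are (i) verifying that $K$ is truly distribution-independent, so the two optima share the same numerator and the algebra closes cleanly, and (ii) making explicit that it is the hybrid-regime inequality $x_{f,uni}^* < L_x$ --- not anything about the shapes of the densities --- that turns the geometric-mean relation $x_{f,tri}^* = \sqrt{L_x\, x_{f,uni}^*}$ into a strict increase. If desired I would append the interpretation that, at fixed $\Lambda$, the triangular profile concentrates demand toward the station, so the cumulative threshold $F(x_f^*)=K/H$ is accumulated only over a longer sparsely populated stretch --- the qualitative reason the flexible portion is longer.
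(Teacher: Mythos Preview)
Your proposal is correct and follows essentially the same route as the paper: isolate the distribution-independent bracket, impose the equal-$\Lambda$ constraint to link $\lambda_0$ and $\lambda$, substitute to obtain $x_{f,tri}^* = \sqrt{L_x\, x_{f,uni}^*}$, and deduce the strict inequality from the hybrid-route bound $x_{f,uni}^* < L_x$. The paper's proof is a two-line version of exactly this; your write-up simply unpacks the steps and makes the role of the hybrid regime explicit.
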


\begin{proof}
 $x_{f,tri}^* = \sqrt{L_x x_{f,uni}^*}$ is obtained by combining Eq.~\ref{eq:F_gen} and Eq.~\ref{eq:x_f_tri}. Then, $x_{f,tri}^* > x_{f,uni}^*$ for $x_{f,uni}^*<L_x$.
\end{proof}

Result~\ref{cor:uni_tri} highlights that a demand gradient along the route (triangular over uniform), which is common in many transit feeders and city-suburb services, favors longer flexible routes at the end.

\subsection{Joint Optimization of the Flexible Route Portion, Fleet Size, and Headway with Variable Vehicle Sizes}\label{sec:op_x_f_s_h_b}

How would the deployment of CAVs of different vehicle sizes affect the flexible route portion, fleet size, and headway in semi-on-demand routes? We examine this problem by considering the required fleet size and respective operating cost $\gamma_o(b)$ and vehicle cost $\gamma_v(b)$ of each vehicle size $b \in \mathcal{B}$ (in pax/veh), where $\mathcal{B}$ is a set of vehicle sizes. We also relax the assumption of a fixed headway $H$ to allow a variable headway function $h(x_f,s)$. A trade-off is expected between vehicle cost (larger vehicles and less frequent services) and waiting time (longer headway). 

The fleet size $s$ in Eq.~\ref{eq:s_fleet} is rewritten with the headway function $h(x_f,s)$ in Eq.~\ref{eq:s_varH}. After re-arranging the terms, we obtain Eq.~\ref{eq:h_varH}.

\begin{align}
\label{eq:s_varH}
s &= \frac{2}{h(x_f,s)} \left(\frac{L_x}{V_d} + \frac{\overline{d_{d,y}}}{V_d}F(x_f) h(x_f,s) + T_l\right)    
\\
\label{eq:h_varH}
h(x_f,s) &= \frac{L_x + T_l V_d}{s V_d / 2 - \overline{d_{d,y}}F(x_f)}
\end{align}

We also need to update several cost components for the variable headway. Firstly, the waiting cost depends on the headway function in Eq.~\ref{eq:c_w_varH}. Next, riding costs for y-directional detours also vary with the headway in Eq.~\ref{eq:c_t_y_varH}, as fewer passengers in each vehicle mean fewer detours in the flexible portion.

\begin{align}
\label{eq:c_w_varH}
c_w(x_f,s) &= \gamma_t \gamma_w \Lambda \frac{h(x_f,s)}{2}
\\
\label{eq:c_t_y_varH}
c_{t,y}(x_f,s) &= \gamma_t \frac{\overline{d_{d,y}}}{2V_d} F(x_f)^2 h(x_f,s)
\end{align}

For operating costs, the x-component of operating costs depends on $h(x_f,s)$ in Eq.~\ref{eq:c_o_x_varH} because of the service frequency change. However, the y-component is unaffected due to the constant number of detours per hour. Lastly, the vehicle cost can no longer be simply expressed as a function of $x_f$ as in Eq.~\ref{eq:c_v}, but is still directly proportional to the fleet size in Eq.~\ref{eq:c_v_varH}. 

\begin{align}
\label{eq:c_o_x_varH}
c_{o,x}(x_f,s) &= \gamma_o(b) \frac{L_x}{h(x_f,s)}
\\
\label{eq:c_v_varH}
c_v(s) &= \gamma_v(b) s
\end{align}

As vehicle sizes and corresponding costs are discrete, it provides an opportunity to enumerate $b$ and solve for the optimal $x_f$ and $s$ that minimize total costs. To account for demand variability and ensure adequate service quality, we impose a capacity requirement constraint that the provided hourly capacity $b/h(x_f,s)$ has to be greater than the demand $\Lambda$ by a buffer of $\rho$, i.e., $\rho b / h(x_f,s) \geq \Lambda$. Combined with Eq.~\ref{eq:h_varH}, this results in a lower bound of the fleet size $s$ for each vehicle capacity $b$ in Eq.~\ref{eq:s_LB}.

\begin{align}
\label{eq:s_LB}
s \geq \frac{2}{V_d} \left( \frac{\Lambda}{\rho b} (L_x + T_l V_d) + \overline{d_{d,y}}F(x_f) \right)
\end{align}

We can formulate a mathematical problem (Eq.~\ref{eq:c_veh_size}) to minimize the total cost $c(x_f,s,b)$ with respect to $x_f$, $s$, and $b$, subject to the capacity requirement constraint in Eq.~\ref{eq:s_LB}. The analytical expression of optimal $x_f$ and $s$ is complicated, so it is more practical to solve the minimization numerically. For each discrete $b$, we can solve for $(x_f^*,s^*)$ that minimize $c(x_f,s,b)$ and subsequently apply Eq.~\ref{eq:h_varH} to obtain the optimal headway $h(x_f^*,s^*)$. We note that Result~\ref{thm:route_choice} still applies here for a determined headway, which is useful for conducting analysis but not obtaining a solution. 

\begin{align}
\label{eq:c_veh_size}
&\min_{x_f,s,b} c(x_f,s,b) 
\nonumber \\
&=
\gamma_t \gamma_a \overline{t_a} (\Lambda - F(x_f)) + \gamma_t \gamma_w \Lambda \frac{h(x_f,s)}{2} 
\nonumber \\
&\quad + \gamma_t \frac{1}{V_d} \int_0^{L_x} F(x) dx
+ \gamma_t h(x_f,s) \frac{\overline{d_{d,y}}}{2V_d} F(x_f)^2 
\nonumber \\
&\quad + \gamma_o(b) \frac{L_x}{h(x_f,s)} 
+ \gamma_o(b) \overline{d_{d,y}} F(x_f)
\nonumber \\
&\quad + \gamma_v(b) \frac{2}{h(x_f,s)} \left( \frac{L_x}{V_d} + \frac{h(x_f,s) \overline{d_{d,y}}}{V_d} F(x_f) + T_l \right)
\nonumber \\
&\text{subject to} ~~~~~ \text{Eq.}~(\ref{eq:s_LB}), ~~~ 0 \leq x_f \leq L_x, ~~~ b \in \mathcal{B}
\end{align}

\section{Numerical Examples}\label{sec:num_eg}

To demonstrate the mathematical models and benefits of the semi-on-demand routes, we present two sets of numerical examples, each with two Chicago bus routes illustrated in Figure~\ref{fig:bus_map} (also studied by \citet{ng_autonomous_2023}). The first example is the joint optimization of the flexible route portion $x_f$ and fleet size $s$ with a given headway $H$ (Section~\ref{sec:op_x_f_s}), and the second is the joint optimization of flexible route portion $x_f$, fleet size $s$, headway $h$, and vehicle size $b$ (Section~\ref{sec:op_x_f_s_h_b}). The two Chicago Transit Authority (CTA) bus routes studied, CTA126 and CTA84, mainly connect passengers to downtown and a railway station respectively. 

\begin{figure}[hbt]
  \centering
  \includegraphics[width=4in]{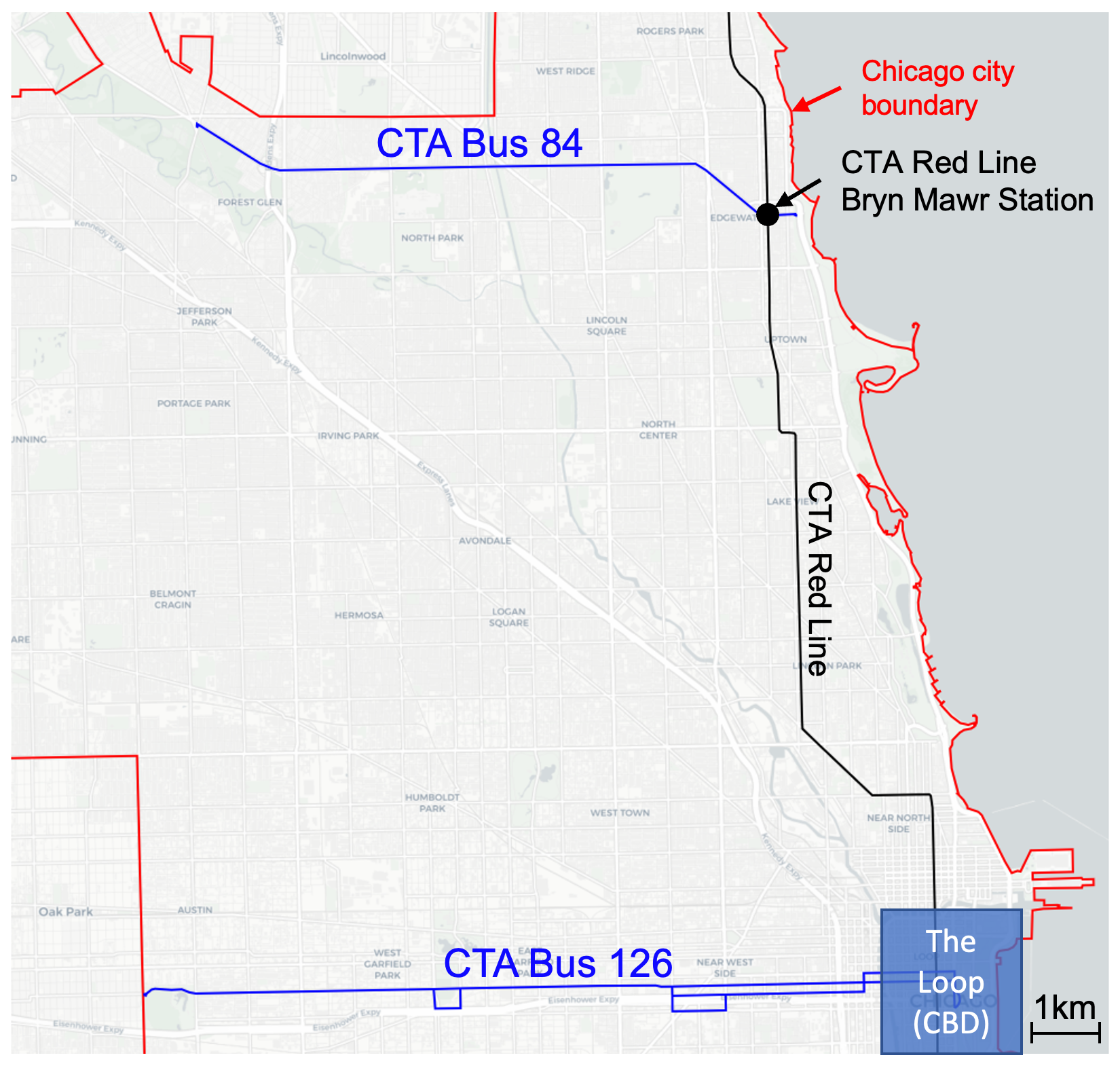}
  \caption{Maps of Bus Routes CTA126 and CTA84 in Chicago \citep{chicago_transit_authority_cta_2022, chicago_transit_authority_cta_2022-1}}
  \label{fig:bus_map}
\end{figure}

The scenario details and parameters are as follows: a demand profile with $\Lambda=80$~pax/h is used with a 15-minute headway $H$ and x-directional demand distributions as uniform and triangular distributions previously discussed in Section~\ref{sec:math_form}. The other parameters are: the value of time $\gamma_t$ is \textdollar16.5/h, cost factors of access(walking) $\gamma_a$ and waiting $\gamma_w$ are 2 and 1.5; the distance-based operational cost $\gamma_o$ is \textdollar0.5/km and vehicle time cost $\gamma_v$ is \textdollar12/h for a minibus \citep{tirachini_economics_2020}; the vehicle speed $V_d$ is 30~km/h and layover time $T_l$ is 10~min; the route length $L_x$ is 10.9~km and 13.4~km for CTA126 and CTA84 respectively; with the assumption of a uniform distribution of y-directional demand in the catchment areas, the average access time $\overline{t_a}$ is 2.25 min for CTA126 and 6.75~min for CTA84 and mean detour $\overline{d_{d,y}}$ is 0.13~km and 0.53~km respectively. 

\subsection{Joint Optimization of the Flexible Route Portion and Fleet Size given a Headway}\label{sec:num_eg_fleet}
The costs and optimal flexible route portion $x_f^*$ are obtained with the analytical formula in Eq.~\ref{eq:c_uni}-\ref{eq:x_f_tri} with the aforementioned parameters.

Under uniform distribution of demand in the x-direction, Figure~\ref{fig:126} shows the total cost, fleet size, and average cost component of the case CTA126, respectively. The optimal flexible route portion $x_{f,uni}^*=7.91$~km serves $F(x_f^*)=62$~pax with the optimal fleet size $s^*=4.76$. When the flexible route portion increases, access cost decreases linearly, while riding cost increases quadratically. As the mean deviation $\overline{d_{d,y}}$ in this case is relatively small, the effect of flexible portions on operating and vehicle costs is minimal, favoring a longer flexible route. Besides, we note that for fixed route operation, the fleet size $s(0)=4.24$, which if not shared across routes would also require 5~veh, the same as the hybrid route service. This suggests that, for scenarios with limited detour impacts on cycle times, the simplified result subject to a fixed fleet size ($\gamma_v = 0$) can provide reasonable initial estimates.
\begin{figure}[hbt]
  \centering
  \includegraphics[width=5.22in]{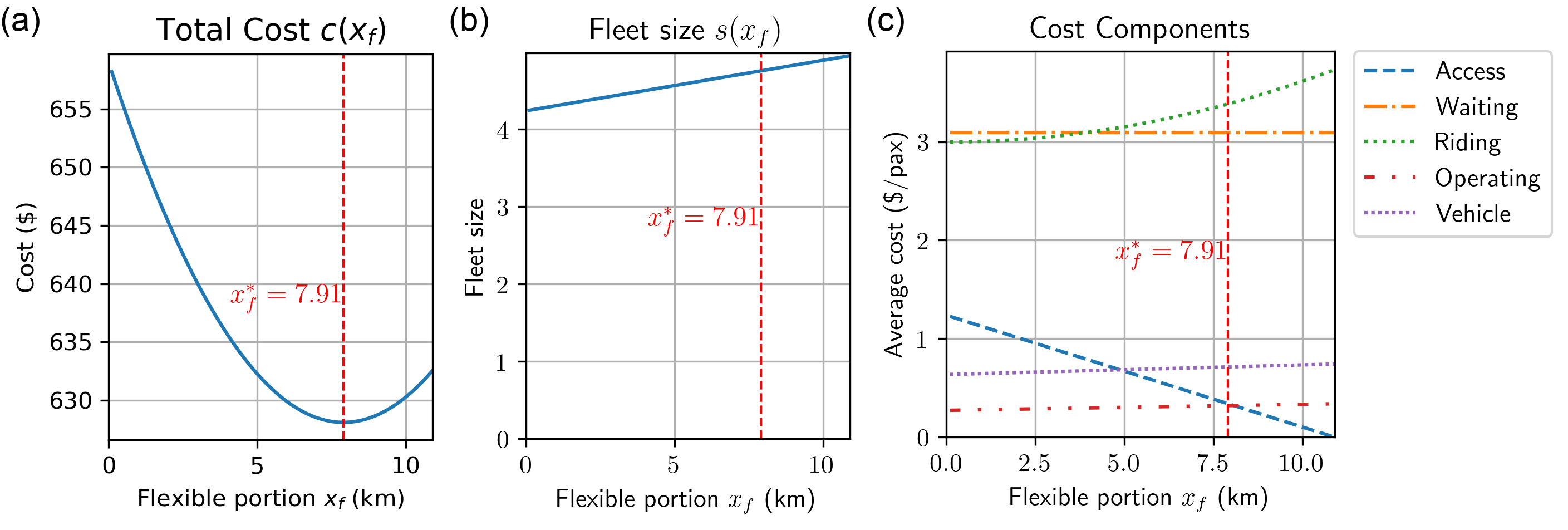}
  \caption{Total Costs $c(x_f)$ (a), Fleet Size $s(x_f)$ (b), and Average Cost Components (c) of Case CTA126 with respect to Flexible Route Portion $x_f$ under Uniform Demand Distribution}
  \label{fig:126}
\end{figure}

The results of triangular distribution in Figure~\ref{fig:126_tri} favor flexible routes even more with $x_{f,tri}^*=9.28$~km. As previously discussed, this is equal to $\sqrt{L_x x_{f,uni}^*}$, or 12.6\% of the route length longer. The concentrated demand closer to the train station favors a longer flexible route portion. On the other hand, access and riding costs change more rapidly closer to the end as shown in Figure~\ref{fig:126_tri}(c). 

\begin{figure}[hbt]
  \centering
  \includegraphics[width=5.22in]{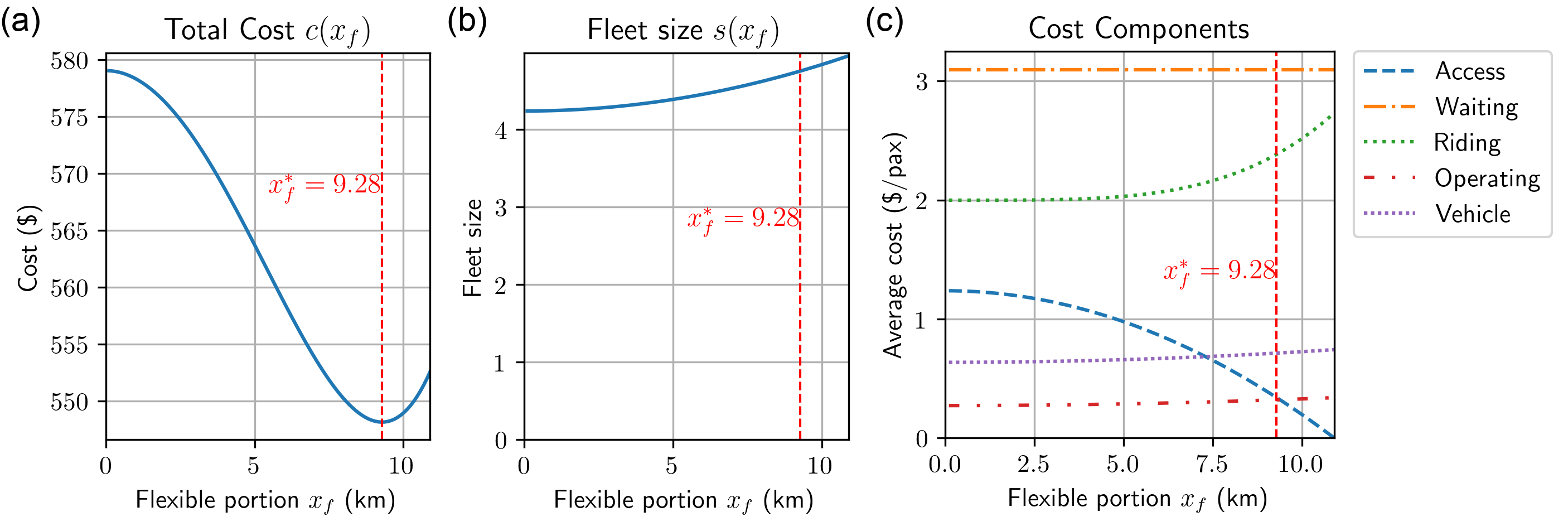}
  \caption{Total Costs $c(x_f)$ (a), Fleet Size $s(x_f)$ (b), and Average Cost Components (c) of Case CTA126 with respect to Flexible Route Portion $x_f$ under Triangular Demand Distribution}
  \label{fig:126_tri}
\end{figure}

Figure~\ref{fig:84} shows similar results for the case CTA84 under a uniform demand distribution. The mean detour $\overline{d_{d,y}}$ and average access time $\overline{t_a}$ are greater than those in the last case, implying higher potential savings in access cost but more vehicle detours for the flexible route portion. The optimal flexible portion $x_{f,uni}^*$ is 6.90~km, serving $F(x_f^*)=45$~pax with the optimal fleet size $s^*=6.37$. The smaller flexible portion is explained by the faster increases in riding, operating, and vehicle costs. 

\begin{figure}[hbt]
  \centering
  \includegraphics[width=5.22in]{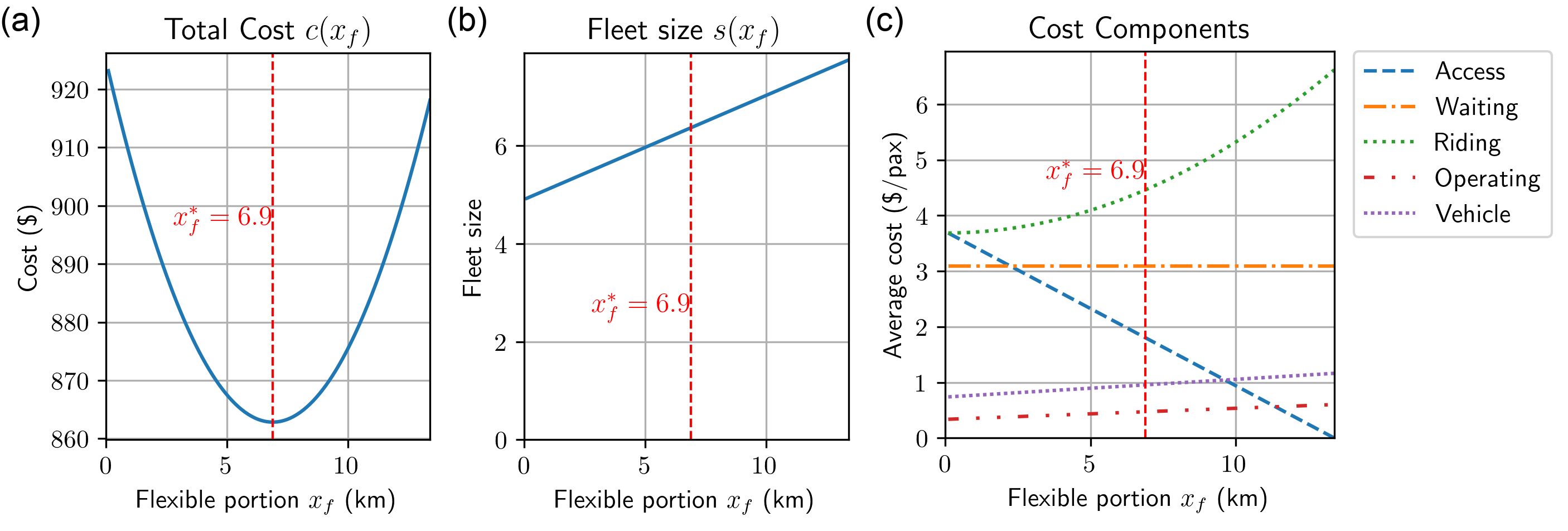}
  \caption{Total Costs $c(x_f)$ (a), Fleet Size $s(x_f)$ (b), and Average Cost Components (c) of Case CTA84 with respect to Flexible Route Portion $x_f$ under Uniform Demand Distributions}
  \label{fig:84}
\end{figure}
% \klaus{print the values for $x_f$ in the charts and a line in the right figures for at the optimal value}

The case of a triangular distribution in demand is illustrated in Figure~\ref{fig:84_tri}. While the optimal flexible portion is extended to $x_{f,tri}^*=9.61$~km, the flexible route proportion over the route length is still smaller than CTA126 (CTA84 is a longer route).

\begin{figure}[hbt]
  \centering
  \includegraphics[width=5.22in]{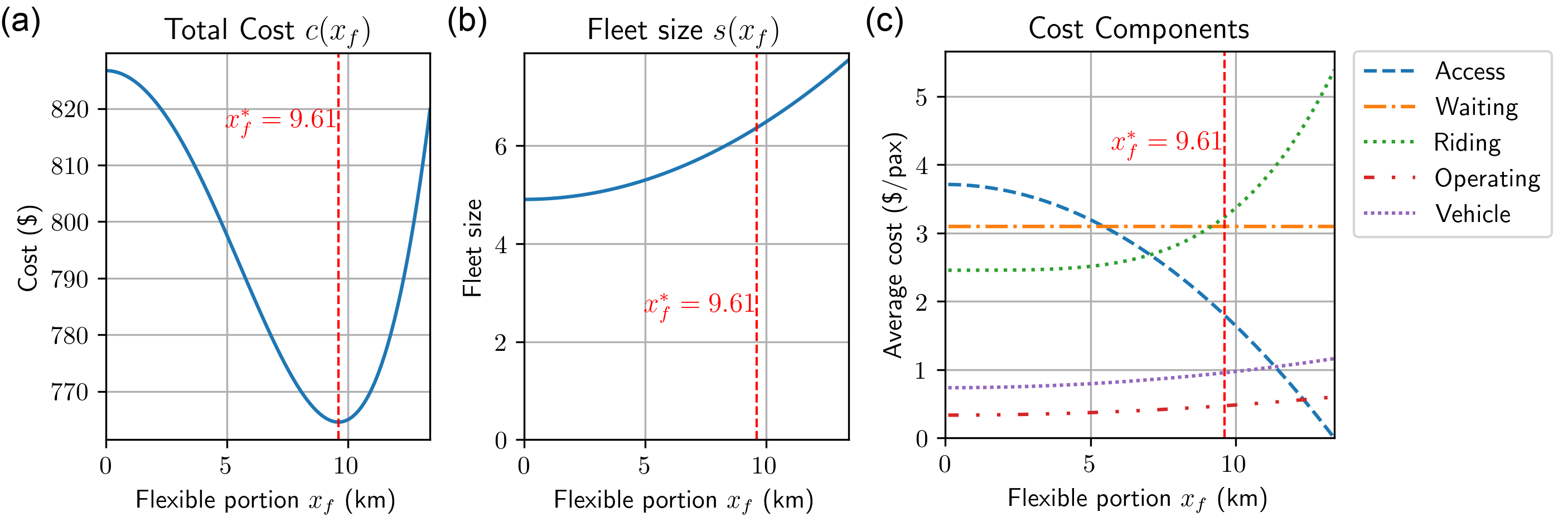}
  \caption{Total Costs $c(x_f)$ (a), Fleet Size $s(x_f)$ (b), and Average Cost Components (c) of Case CTA84 with respect to Flexible Route Portion $x_f$ under Triangular Demand Distributions}
  \label{fig:84_tri}
\end{figure}

In summary, serving passengers further away with a flexible route portion lowers the total cost, in particular user cost, for both routes. Smaller mean detour $\overline{d_{d,y}}$ and an increasing demand gradient favor longer flexible portions.

\subsection{Joint Optimization of the Flexible Route Portion, Fleet Size, and Headway with Variable Vehicle Sizes}
The total cost minimization in Eq.~\ref{eq:c_veh_size} can be achieved by solving for the optimal $x_f$ and $s$ for each vehicle size $b$ with numerical solvers, such as L-BFGS-B \citep{liu_limited_1989} used in this example. This subsection shows results under a uniform distribution of demand. The vehicle types considered are car, van, 20-seater, minibus, and bus, with capacities $b\in$[5,8,20,44,70] (pax/veh), operating costs $\gamma_o\in$ [0.6187, 0.6370, 0.6938, 0.7507, 0.8900] (\$/km) and vehicle costs $\gamma_v\in$ [2.53, 3.63, 7.59, 11.55, 15.73] (\$/h).\footnote{Distance-based vehicle operating costs $\gamma_o$ and time-based capital costs $\gamma_v$ are derived from different vehicle classes and capacities with reference to \citet{tirachini_economics_2020} under their scenario of 50\% reduction in driving costs.} The capacity buffer is set as $\rho=0.7$.

Figure~\ref{fig:126_varH}(a) shows the total costs $c(x_f,s^*,b)$ under the optimal fleet size for the CTA126 case. It illustrates the impact of varying the flexible route portion $x_f$ while optimizing the fleet size. For all vehicle types, the cost curve is monotonically decreasing, indicating only flexible route services. 
For larger vehicles (e.g., standard buses), apart from the larger headway provided, a longer flexible portion implies serving a large number of passengers per trip. The detours for passengers picked up early in the route accumulate, leading to a rapid increase in in-vehicle riding cost ($c_{t,y}$) for all passengers on board. 
Conversely, smaller vehicles (e.g., cars and vans) have lower capacities apart from smaller headway. They complete the flexible pickup phase faster with fewer total deviations, resulting in a flatter cost curve as the flexible portion extends. This explains why smaller vehicles favor fully flexible or highly hybrid routes compared to larger buses. Furthermore, compared to the minimum cost of minibus with fixed headway in Figure~\ref{fig:126}(a), the minimum cost when headway is also optimized now drops to 572 from 627 by more than 8\%. This demonstrates the benefit of the simultaneous optimization.

Figure~\ref{fig:126_varH}(b) shows the optimal fleet sizes $s^*$ under varying flexible route portions $x_f$. To ensure sufficient capacity to meet the demand, the fleet sizes for cars and vans are the minimum required independent of $x_f$. For other vehicle types, the optimal fleet sizes increase gradually to serve more detours. The headways shown in Figure~\ref{fig:126_varH}(c) are much lower than the 15-minute headway setting in the previous examples, suggesting that the reduction in waiting time outweighs the increase in operating cost (Figure~\ref{fig:126_varH}(d)), if the capital and operating costs of SAVs are as low as forecasted (relative to the assumed values of time).

\begin{figure}[hbt]
  \centering
  \includegraphics[width=6in]{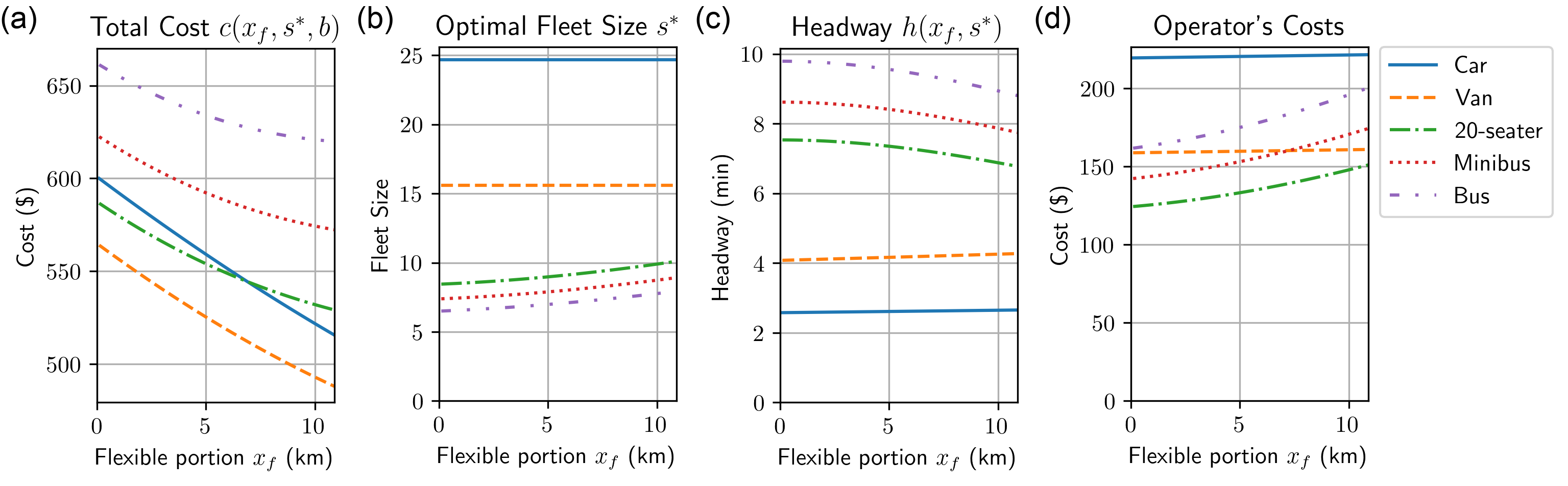}
  \caption{Total Costs under Optimal Fleet Size $c(x_f,s^*,b)$ (a), Optimal Fleet Size $s^*$ (b), Headway $h(x_f,s^*)$ (c), and Total Operator Costs (d) with respect to Flexible Route Portion $x_f$ in Case CTA126}
  \label{fig:126_varH}
\end{figure}

Figure~\ref{fig:84_varH} shows similar results for the case CTA84. The changes with increasing flexible route portions are more considerable due to the larger detours for each passenger.

\begin{figure}[hbt]
  \centering
  \includegraphics[width=6in]{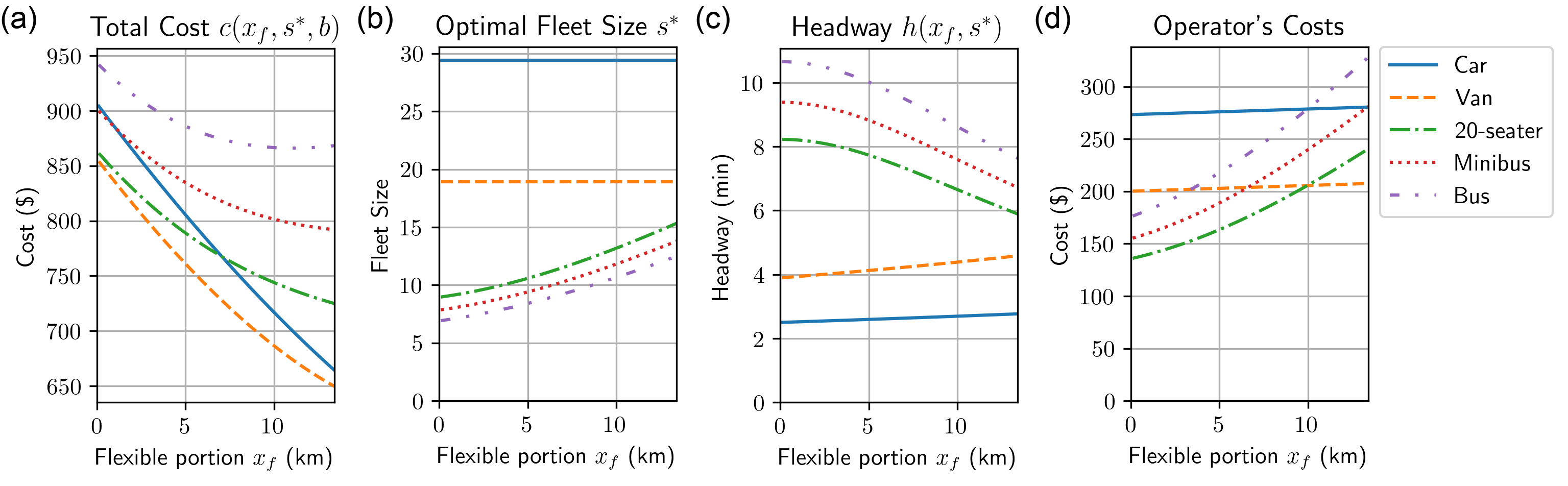}
  \caption{Total Costs under Optimal Fleet Size $c(x_f,s^*,b)$ (a), Optimal Fleet Size $s^*$ (b), Headway $h(x_f,s^*)$ (c), and Total Operator Costs (d) with respect to Flexible Route Portion $x_f$ in Case CTA84}
  \label{fig:84_varH}
\end{figure}

Flexible routes are optimal in most cases, different from the previous results in Section~\ref{sec:num_eg_fleet} where hybrid routes are optimal under a constant 15-minute headway. This is primarily contributed by the much lower headway, particularly for smaller vehicle sizes. However, the operator costs required to provide such low headway are considerably higher than the current human-driven fixed route operations (\$124 and \$150 for CTA126 and CTA84 respectively). If a budget constraint is applied, the optimal results would likely be hybrid route similar to Section~\ref{sec:num_eg_fleet} where changes in operator costs are much smaller under fixed headway.

To further disentangle the impacts of the primary decision variables, Figure~\ref{fig:3d_surfaces} visualizes the total generalized cost surfaces with respect to the flexible route portion $x_f$ and fleet size $s$. It is important to note that headway $h$ and $s$ are directly coupled through the cycle time relationship (Eq.~\ref{eq:s_fleet}), so we can simplify the analysis for each vehicle size $b$ to optimizing $x_f$ and $s$ with $h$ determined endogenously. The results reveal that the sensitivity to these variables is fundamentally determined by the vehicle size. For smaller vehicles (e.g., car in Figure~\ref{fig:3d_surfaces}), the feasible region is limited to high $s$ to satisfy capacity constraint. The cost surface shows the global minimum at a high flexible route portion and minimum fleet size. The steep gradients along both axes indicate that the system is highly sensitive to sub-optimal settings in either variable. Conversely, for large vehicles (e.g., minibus and bus in Figure~\ref{fig:3d_surfaces}), the optimal solution shifts to a purely fixed route ($x_f = 0$) at small fleet size (cost curve increases monotonically along $x_f$). The cost surface is also convex along the $s$ axis, suggesting the importance of a balance between increasing vehicles for a smaller headway and limit spending on operating costs. This demonstrates that vehicle size acts as the critical enabling variable: smaller vehicles unlock the significant potential of optimizing the flexible route portion alongside fleet size, whereas for traditional large transit vehicles, fleet size remains the dominant tactical decision variable before hybrid/flexible routes are attractive.

Table~\ref{tab:veh_size} lists the detailed results. 

% Please add the following required packages to your document preamble:
% \usepackage{multirow}
% \usepackage{lscape}
\begin{landscape}

\begin{figure}[hbt]
  \centering
  \includegraphics[width=8in]{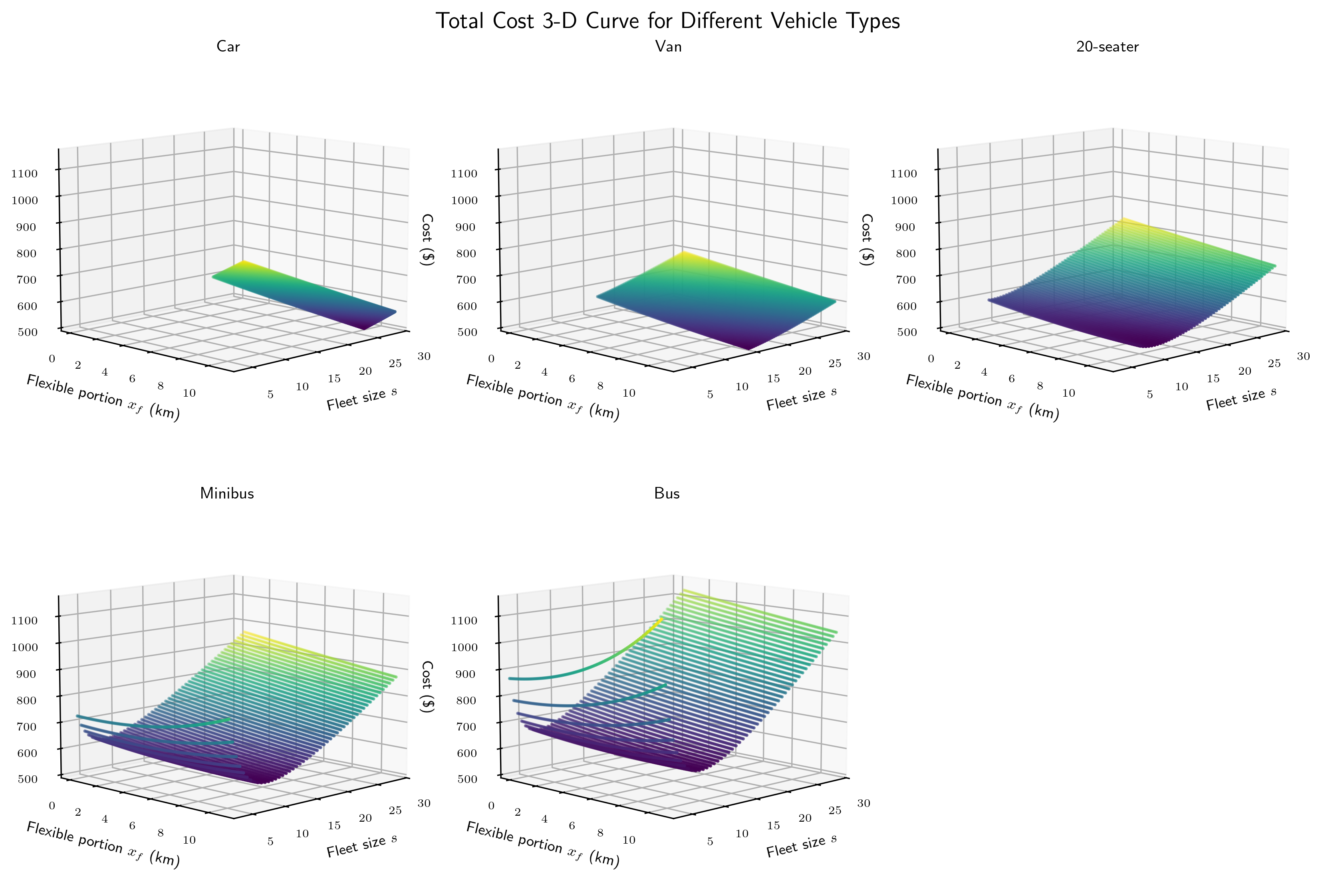}
  \caption{Total Cost 3-D Curves under Varying Flexible Route Portions and Fleet Sizes $c(x_f,s,b)$ for Different Vehicle Sizes in Case CTA84}
  \label{fig:3d_surfaces}
\end{figure}

\begin{table}[]
\caption{Results of Joint Optimization of Vehicle Size, Flexible Route Portion, and Fleet Size}
\label{tab:veh_size}
\begin{tabular}{|clll|ccccc|ccccc|}
\hline
\multicolumn{4}{|l|}{\textbf{Bus route}}                                                                                                                                                                                                                                                               & \multicolumn{5}{c|}{\textbf{CTA126}}                                                                                                                                                                                                                                                                                     & \multicolumn{5}{c|}{\textbf{CTA84}}                                                                                                                                                                                                                                                                                      \\ \hline
\multicolumn{1}{|c|}{\multirow{2}{*}{\textbf{Vehicle}}}                                                    & \multicolumn{3}{l|}{\textbf{Type}}                                                                                                                                                        & \multicolumn{1}{c|}{\textbf{Car}}   & \multicolumn{1}{c|}{\textbf{Van}}   & \multicolumn{1}{c|}{\textbf{\begin{tabular}[c]{@{}c@{}}20-\\ Seater\end{tabular}}} & \multicolumn{1}{c|}{\textbf{\begin{tabular}[c]{@{}c@{}}Mini-\\ bus\end{tabular}}} & \textbf{\begin{tabular}[c]{@{}c@{}}Stand-\\ ard\\ bus\end{tabular}} & \multicolumn{1}{c|}{\textbf{Car}}   & \multicolumn{1}{c|}{\textbf{Van}}   & \multicolumn{1}{c|}{\textbf{\begin{tabular}[c]{@{}c@{}}20-\\ Seater\end{tabular}}} & \multicolumn{1}{c|}{\textbf{\begin{tabular}[c]{@{}c@{}}Mini-\\ bus\end{tabular}}} & \textbf{\begin{tabular}[c]{@{}c@{}}Stand-\\ ard\\ bus\end{tabular}} \\ \cline{2-14} 
\multicolumn{1}{|c|}{}                                                                                     & \multicolumn{1}{l|}{\textbf{Size}}                                                           & \multicolumn{1}{l|}{b}                & \begin{tabular}[c]{@{}l@{}}pax\\ /veh\end{tabular} & \multicolumn{1}{c|}{5}              & \multicolumn{1}{c|}{8}              & \multicolumn{1}{c|}{20}                                                            & \multicolumn{1}{c|}{44}                                                           & 70                                                                  & \multicolumn{1}{c|}{5}              & \multicolumn{1}{c|}{8}              & \multicolumn{1}{c|}{20}                                                            & \multicolumn{1}{c|}{44}                                                           & 70                                                                  \\ \hline
\multicolumn{1}{|c|}{\multirow{3}{*}{\textbf{\begin{tabular}[c]{@{}c@{}}Optimal\\ variable\end{tabular}}}} & \multicolumn{1}{l|}{\textbf{\begin{tabular}[c]{@{}l@{}}Flexible\\ portion\end{tabular}}}     & \multicolumn{1}{l|}{$x_f^*$}          & km                                                 & \multicolumn{1}{c|}{\textit{10.90}} & \multicolumn{1}{c|}{\textit{10.90}} & \multicolumn{1}{c|}{\textit{10.90}}                                                & \multicolumn{1}{c|}{\textit{10.90}}                                               & \textit{10.90}                                                      & \multicolumn{1}{c|}{\textit{13.40}} & \multicolumn{1}{c|}{\textit{13.40}} & \multicolumn{1}{c|}{\textit{13.40}}                                                & \multicolumn{1}{c|}{\textit{13.40}}                                               & 11.04                                                               \\ \cline{2-14} 
\multicolumn{1}{|c|}{}                                                                                     & \multicolumn{1}{l|}{\textbf{Fleet size}}                                                     & \multicolumn{1}{l|}{$s^*$}            & veh                                                & \multicolumn{1}{c|}{\textit{24.69}} & \multicolumn{1}{c|}{\textit{15.60}} & \multicolumn{1}{c|}{10.10}                                                         & \multicolumn{1}{c|}{8.92}                                                         & 7.93                                                                & \multicolumn{1}{c|}{\textit{29.42}} & \multicolumn{1}{c|}{\textit{18.90}} & \multicolumn{1}{c|}{15.33}                                                         & \multicolumn{1}{c|}{13.79}                                                        & 11.20                                                               \\ \cline{2-14} 
\multicolumn{1}{|c|}{}                                                                                     & \multicolumn{1}{l|}{\textbf{Headway}}                                                        & \multicolumn{1}{l|}{$h^*$}            & min                                                & \multicolumn{1}{c|}{2.65}           & \multicolumn{1}{c|}{4.27}           & \multicolumn{1}{c|}{6.77}                                                          & \multicolumn{1}{c|}{7.75}                                                         & 8.81                                                                & \multicolumn{1}{c|}{2.77}           & \multicolumn{1}{c|}{4.58}           & \multicolumn{1}{c|}{5.89}                                                          & \multicolumn{1}{c|}{6.72}                                                         & 8.31                                                                \\ \hline
\multicolumn{1}{|c|}{\multirow{3}{*}{\textbf{\begin{tabular}[c]{@{}c@{}}Average\\ time\end{tabular}}}}     & \multicolumn{1}{l|}{\textbf{Access}}                                                         & \multicolumn{1}{l|}{$\overline{t_a}$} & min                                                & \multicolumn{1}{c|}{0.00}           & \multicolumn{1}{c|}{0.00}           & \multicolumn{1}{c|}{0.00}                                                          & \multicolumn{1}{c|}{0.00}                                                         & 0.00                                                                & \multicolumn{1}{c|}{0.00}           & \multicolumn{1}{c|}{0.00}           & \multicolumn{1}{c|}{0.00}                                                          & \multicolumn{1}{c|}{0.00}                                                         & 1.19                                                                \\ \cline{2-14} 
\multicolumn{1}{|c|}{}                                                                                     & \multicolumn{1}{l|}{\textbf{Waiting}}                                                        & \multicolumn{1}{l|}{$\overline{t_w}$} & min                                                & \multicolumn{1}{c|}{1.33}           & \multicolumn{1}{c|}{2.14}           & \multicolumn{1}{c|}{3.39}                                                          & \multicolumn{1}{c|}{3.87}                                                         & 4.40                                                                & \multicolumn{1}{c|}{1.38}           & \multicolumn{1}{c|}{2.29}           & \multicolumn{1}{c|}{2.95}                                                          & \multicolumn{1}{c|}{3.36}                                                         & 4.16                                                                \\ \cline{2-14} 
\multicolumn{1}{|c|}{}                                                                                     & \multicolumn{1}{l|}{\textbf{Riding}}                                                         & \multicolumn{1}{l|}{$\overline{t_t}$} & min                                                & \multicolumn{1}{c|}{11.37}          & \multicolumn{1}{c|}{11.66}          & \multicolumn{1}{c|}{12.10}                                                         & \multicolumn{1}{c|}{12.28}                                                        & 12.47                                                               & \multicolumn{1}{c|}{15.37}          & \multicolumn{1}{c|}{16.66}          & \multicolumn{1}{c|}{17.59}                                                         & \multicolumn{1}{c|}{18.18}                                                        & 17.41                                                               \\ \hline
\multicolumn{1}{|c|}{\multirow{3}{*}{\textbf{\begin{tabular}[c]{@{}c@{}}Time\\ std. dev.\end{tabular}}}}   & \multicolumn{1}{l|}{\textbf{Access}}                                                         & \multicolumn{1}{l|}{$\sigma_{t,a}$}   & min                                                & \multicolumn{1}{c|}{0.00}           & \multicolumn{1}{c|}{0.00}           & \multicolumn{1}{c|}{0.00}                                                          & \multicolumn{1}{c|}{0.00}                                                         & 0.00                                                                & \multicolumn{1}{c|}{0.00}           & \multicolumn{1}{c|}{0.00}           & \multicolumn{1}{c|}{0.00}                                                          & \multicolumn{1}{c|}{0.00}                                                         & 1.45                                                                \\ \cline{2-14} 
\multicolumn{1}{|c|}{}                                                                                     & \multicolumn{1}{l|}{\textbf{Waiting}}                                                        & \multicolumn{1}{l|}{$\sigma_{t,w}$}   & min                                                & \multicolumn{1}{c|}{0.77}           & \multicolumn{1}{c|}{1.23}           & \multicolumn{1}{c|}{1.96}                                                          & \multicolumn{1}{c|}{2.24}                                                         & 2.54                                                                & \multicolumn{1}{c|}{0.80}           & \multicolumn{1}{c|}{1.32}           & \multicolumn{1}{c|}{1.70}                                                          & \multicolumn{1}{c|}{1.94}                                                         & 2.40                                                                \\ \cline{2-14} 
\multicolumn{1}{|c|}{}                                                                                     & \multicolumn{1}{l|}{\textbf{Riding}}                                                         & \multicolumn{1}{l|}{$\sigma_{t,r}$}   & min                                                & \multicolumn{1}{c|}{6.75}           & \multicolumn{1}{c|}{6.93}           & \multicolumn{1}{c|}{7.20}                                                          & \multicolumn{1}{c|}{7.30}                                                         & 7.41                                                                & \multicolumn{1}{c|}{9.63}           & \multicolumn{1}{c|}{10.42}          & \multicolumn{1}{c|}{10.98}                                                         & \multicolumn{1}{c|}{11.33}                                                        & 11.04                                                               \\ \hline
\multicolumn{1}{|c|}{\multirow{8}{*}{\textbf{Cost}}}                                                       & \multicolumn{1}{l|}{\textbf{Access}}                                                         & \multicolumn{1}{l|}{$c_a$}            & \$                                                 & \multicolumn{1}{c|}{0.00}           & \multicolumn{1}{c|}{0.00}           & \multicolumn{1}{c|}{0.00}                                                          & \multicolumn{1}{c|}{0.00}                                                         & 0.00                                                                & \multicolumn{1}{c|}{0.00}           & \multicolumn{1}{c|}{0.00}           & \multicolumn{1}{c|}{0.00}                                                          & \multicolumn{1}{c|}{0.00}                                                         & 52.24                                                               \\ \cline{2-14} 
\multicolumn{1}{|c|}{}                                                                                     & \multicolumn{1}{l|}{\textbf{Waiting}}                                                        & \multicolumn{1}{l|}{$c_w$}            & \$                                                 & \multicolumn{1}{c|}{43.76}          & \multicolumn{1}{c|}{70.48}          & \multicolumn{1}{c|}{111.76}                                                        & \multicolumn{1}{c|}{127.85}                                                       & 145.29                                                              & \multicolumn{1}{c|}{45.70}          & \multicolumn{1}{c|}{75.64}          & \multicolumn{1}{c|}{97.24}                                                         & \multicolumn{1}{c|}{110.96}                                                       & 137.12                                                              \\ \cline{2-14} 
\multicolumn{1}{|c|}{}                                                                                     & \multicolumn{1}{l|}{\textbf{\begin{tabular}[c]{@{}l@{}}Riding\\ (x-dir.)\end{tabular}}}      & \multicolumn{1}{l|}{$c_{t,x}$}        & \$                                                 & \multicolumn{1}{c|}{239.80}         & \multicolumn{1}{c|}{239.80}         & \multicolumn{1}{c|}{239.80}                                                        & \multicolumn{1}{c|}{239.80}                                                       & 239.80                                                              & \multicolumn{1}{c|}{294.80}         & \multicolumn{1}{c|}{294.80}         & \multicolumn{1}{c|}{294.80}                                                        & \multicolumn{1}{c|}{294.80}                                                       & 294.80                                                              \\ \cline{2-14} 
\multicolumn{1}{|c|}{}                                                                                     & \multicolumn{1}{l|}{\textbf{\begin{tabular}[c]{@{}l@{}}Riding\\ (y-dir.)\end{tabular}}}      & \multicolumn{1}{l|}{$c_{t,y}$}        & \$                                                 & \multicolumn{1}{c|}{10.37}          & \multicolumn{1}{c|}{16.71}          & \multicolumn{1}{c|}{26.49}                                                         & \multicolumn{1}{c|}{30.31}                                                        & 34.44                                                               & \multicolumn{1}{c|}{43.33}          & \multicolumn{1}{c|}{71.72}          & \multicolumn{1}{c|}{92.20}                                                         & \multicolumn{1}{c|}{105.20}                                                       & 88.30                                                               \\ \cline{2-14} 
\multicolumn{1}{|c|}{}                                                                                     & \multicolumn{1}{l|}{\textbf{\begin{tabular}[c]{@{}l@{}}Operational\\ (x-dir.)\end{tabular}}} & \multicolumn{1}{l|}{$c_{o,x}$}        & \$                                                 & \multicolumn{1}{c|}{152.56}         & \multicolumn{1}{c|}{97.53}          & \multicolumn{1}{c|}{66.99}                                                         & \multicolumn{1}{c|}{63.36}                                                        & 66.10                                                               & \multicolumn{1}{c|}{179.61}         & \multicolumn{1}{c|}{111.72}         & \multicolumn{1}{c|}{94.65}                                                         & \multicolumn{1}{c|}{89.75}                                                        & 86.11                                                               \\ \cline{2-14} 
\multicolumn{1}{|c|}{}                                                                                     & \multicolumn{1}{l|}{\textbf{\begin{tabular}[c]{@{}l@{}}Operational\\ (y-dir.)\end{tabular}}} & \multicolumn{1}{l|}{$c_{o,y}$}        & \$                                                 & \multicolumn{1}{c|}{6.60}           & \multicolumn{1}{c|}{6.79}           & \multicolumn{1}{c|}{7.40}                                                          & \multicolumn{1}{c|}{8.01}                                                         & 9.49                                                                & \multicolumn{1}{c|}{26.40}          & \multicolumn{1}{c|}{27.18}          & \multicolumn{1}{c|}{29.60}                                                         & \multicolumn{1}{c|}{32.03}                                                        & 31.29                                                               \\ \cline{2-14} 
\multicolumn{1}{|c|}{}                                                                                     & \multicolumn{1}{l|}{\textbf{Vehicle}}                                                        & \multicolumn{1}{l|}{$c_v$}            & \$                                                 & \multicolumn{1}{c|}{62.47}          & \multicolumn{1}{c|}{56.63}          & \multicolumn{1}{c|}{76.66}                                                         & \multicolumn{1}{c|}{103.01}                                                       & 124.80                                                              & \multicolumn{1}{c|}{74.43}          & \multicolumn{1}{c|}{68.61}          & \multicolumn{1}{c|}{116.38}                                                        & \multicolumn{1}{c|}{159.26}                                                       & 176.19                                                              \\ \cline{2-14} 
\multicolumn{1}{|c|}{}                                                                                     & \multicolumn{1}{l|}{\textbf{Total}}                                                          & \multicolumn{1}{l|}{$c$}              & \$                                                 & \multicolumn{1}{c|}{515.56}         & \multicolumn{1}{c|}{487.94}         & \multicolumn{1}{c|}{529.11}                                                        & \multicolumn{1}{c|}{572.34}                                                       & 619.93                                                              & \multicolumn{1}{c|}{664.27}         & \multicolumn{1}{c|}{649.66}         & \multicolumn{1}{c|}{724.87}                                                        & \multicolumn{1}{c|}{792.01}                                                       & 866.04                                                              \\ \hline
\end{tabular}
\end{table}
\end{landscape}

\subsubsection{Sensitivity to Operator Cost}

To assess the sensitivity to the forecasted operator costs (or user value of time), Figures~\ref{fig:sen_VOT} and \ref{fig:sen_VOT_300} present the results under 200\% and 300\% of operator costs (or effectively 50\% and 33\% of users' value of time) for case CTA84. These scenarios serve as a proxy for comparing SAV operations (baseline low cost) against lower automation or human-driven operations (higher cost).

\begin{figure}[hbt]
  \centering
  \includegraphics[width=6in]{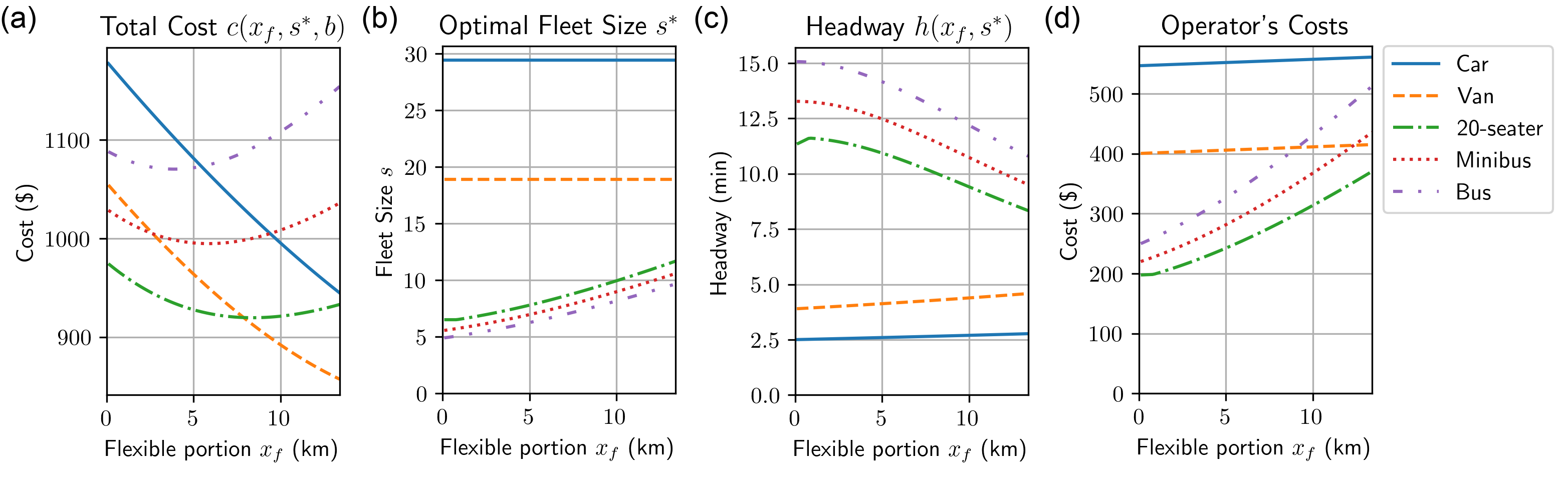}
  \caption{Total Costs under Optimal Fleet Size $c(x_f,s^*,b)$ (a), Optimal Fleet Size $s^*$ (b), Headway $h(x_f,s^*)$ (c), and Total Operator Costs (d) with respect to Flexible Route Portion $x_f$ under 200\% Operator Costs for Case CTA84}
  \label{fig:sen_VOT}
\end{figure}

\begin{figure}[hbt]
  \centering
  \includegraphics[width=6in]{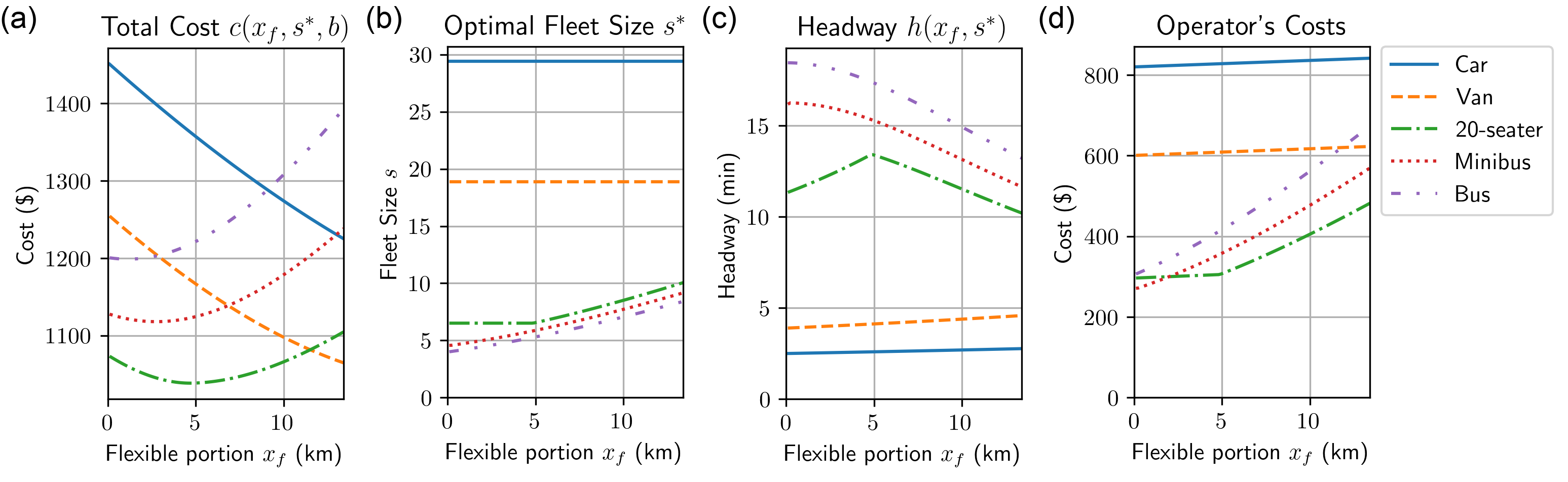}
  \caption{Total Costs under Optimal Fleet Size $c(x_f,s^*,b)$ (a), Optimal Fleet Size $s^*$ (b), Headway $h(x_f,s^*)$ (c), and Total Operator Costs (d) with respect to Flexible Route Portion $x_f$ under 300\% Operator Costs for Case CTA84}
  \label{fig:sen_VOT_300}
\end{figure}

As operating costs increase, we observe a distinct reduction in the optimal flexible route portion in subplots (a). Particularly, in the 300\% cost scenario shown in Figure~\ref{fig:sen_VOT_300}, the optimal solution (the cost minimum point) shifts away from fully flexible operations, with the hybrid route becoming dominant for bigger vehicles (20-seater, minibus, and bus). This confirms that the economic viability of extensive flexible routing is heavily dependent on keeping operating costs low by SAVs.

Additionally, the optimal flexible route portion decreases with increasing vehicle sizes. This finding implies that smaller vehicles, despite requiring larger fleet sizes, can better support longer flexible route portions. Each trip with a smaller vehicle serves fewer passengers and incurs fewer detours, contrasting with larger vehicles where the accumulation of detours per trip can become excessive. Although more smaller vehicles are required to serve the same demand, each trip involves smaller detours in the flexible route portion alongside shorter waiting times. This also explains the general reduction in headway with an increase in the flexible route portion in subplots (c). These findings align with Result~\ref{thm:route_choice}, where lower operational costs ($\gamma_o$ and $\gamma_v$) and smaller headway ($H$) favor flexible routes. Specifically for the case of CTA84, we can draw theoretical bounds on $\gamma_o$ that favor fixed routes as $\gamma_o \geq 7 - 0.067 \gamma_v$ from Result~\ref{thm:route_choice} (with constant headway assumed). Substituting the operating costs of different vehicle types suggests, even at conservative cost estimates (300\%) for high-capacity vehicles, hybrid routes still bring some benefits. However, their extent would be greatly limited by operating costs following Eq.~\ref{eq:F_gen}. For example, the optimal flexible route portion for a minibus would drop from fully flexible (13.4~km) at 100\% cost to 5.87~km at 200\% cost and 2.75~km at 300\% cost.

In short, a balance across vehicle size is important --- using 4-passenger cars would necessitate a lot of vehicles to serve the demand; utilizing larger buses could result in excessive detours and longer waiting times. The optimal scenario with the assumed operator costs brought by SAVs in both cases is using a van of 8 passengers with only flexible route service. However, until SAV technology matures and operating costs decrease significantly, larger vehicles operating hybrid routes might represent a more pragmatic or economically optimal interim solution. This also applies to cases where operators face a strict budget such that the system-optimally low headway is infeasible. Table~\ref{tab:veh_size_sen} shows the detailed results.

\begin{landscape}
\begin{table}[]
\caption{Results of Sensitivity Analysis for the Joint Optimization of Vehicle Size, Flexible Route Portion, and Fleet Size}
\label{tab:veh_size_sen}
\begin{tabular}{|clll|ccccc|ccccc|}
\hline
\multicolumn{4}{|l|}{\textbf{Bus route CTA84}}                                                                                                                                                                                                                                                         & \multicolumn{5}{c|}{\textbf{200\% Operator Cost}}                                                                                                                                                                                                                                                                      & \multicolumn{5}{c|}{\textbf{300\% Operator Cost}}                                                                                                                                                                                                                                                                      \\ \hline
\multicolumn{1}{|c|}{\multirow{2}{*}{\textbf{Vehicle}}}                                                    & \multicolumn{3}{l|}{\textbf{Type}}                                                                                                                                                        & \multicolumn{1}{c|}{\textbf{Car}}   & \multicolumn{1}{c|}{\textbf{Van}}   & \multicolumn{1}{c|}{\textbf{\begin{tabular}[c]{@{}c@{}}20-\\ Seater\end{tabular}}} & \multicolumn{1}{c|}{\textbf{\begin{tabular}[c]{@{}c@{}}Mini-\\ bus\end{tabular}}} & \textbf{\begin{tabular}[c]{@{}c@{}}Stand-\\ ard\\ bus\end{tabular}} & \multicolumn{1}{c|}{\textbf{Car}}   & \multicolumn{1}{c|}{\textbf{Van}}   & \multicolumn{1}{c|}{\textbf{\begin{tabular}[c]{@{}c@{}}20-\\ Seater\end{tabular}}} & \multicolumn{1}{c|}{\textbf{\begin{tabular}[c]{@{}c@{}}Mini-\\ bus\end{tabular}}} & \textbf{\begin{tabular}[c]{@{}c@{}}Stand-\\ ard\\ bus\end{tabular}} \\ \cline{2-14} 
\multicolumn{1}{|c|}{}                                                                                     & \multicolumn{1}{l|}{\textbf{Size}}                                                           & \multicolumn{1}{l|}{b}                & \begin{tabular}[c]{@{}l@{}}pax\\ /veh\end{tabular} & \multicolumn{1}{c|}{5}              & \multicolumn{1}{c|}{8}              & \multicolumn{1}{c|}{20}                                                            & \multicolumn{1}{c|}{44}                                                           & 70                                                                  & \multicolumn{1}{c|}{5}              & \multicolumn{1}{c|}{8}              & \multicolumn{1}{c|}{20}                                                            & \multicolumn{1}{c|}{44}                                                           & 70                                                                  \\ \hline
\multicolumn{1}{|c|}{\multirow{3}{*}{\textbf{\begin{tabular}[c]{@{}c@{}}Optimal\\ variable\end{tabular}}}} & \multicolumn{1}{l|}{\textbf{\begin{tabular}[c]{@{}l@{}}Flexible\\ portion\end{tabular}}}     & \multicolumn{1}{l|}{$x_f^*$}          & km                                                 & \multicolumn{1}{c|}{\textit{13.40}} & \multicolumn{1}{c|}{\textit{13.40}} & \multicolumn{1}{c|}{8.39}                                                          & \multicolumn{1}{c|}{5.87}                                                         & 3.89                                                                & \multicolumn{1}{c|}{\textit{13.40}} & \multicolumn{1}{c|}{\textit{13.40}} & \multicolumn{1}{c|}{4.69}                                                          & \multicolumn{1}{c|}{2.75}                                                         & \textit{1.14}                                                       \\ \cline{2-14} 
\multicolumn{1}{|c|}{}                                                                                     & \multicolumn{1}{l|}{\textbf{Fleet size}}                                                     & \multicolumn{1}{l|}{$s^*$}            & veh                                                & \multicolumn{1}{c|}{\textit{29.42}} & \multicolumn{1}{c|}{\textit{18.90}} & \multicolumn{1}{c|}{9.19}                                                          & \multicolumn{1}{c|}{7.27}                                                         & 5.90                                                                & \multicolumn{1}{c|}{\textit{29.42}} & \multicolumn{1}{c|}{\textit{18.90}} & \multicolumn{1}{c|}{6.51}                                                          & \multicolumn{1}{c|}{5.20}                                                         & 4.24                                                                \\ \cline{2-14} 
\multicolumn{1}{|c|}{}                                                                                     & \multicolumn{1}{l|}{\textbf{Headway}}                                                        & \multicolumn{1}{l|}{$h^*$}            & min                                                & \multicolumn{1}{c|}{2.77}           & \multicolumn{1}{c|}{4.58}           & \multicolumn{1}{c|}{9.93}                                                          & \multicolumn{1}{c|}{12.21}                                                        & 14.50                                                               & \multicolumn{1}{c|}{2.77}           & \multicolumn{1}{c|}{4.58}           & \multicolumn{1}{c|}{13.35}                                                         & \multicolumn{1}{c|}{15.94}                                                        & 18.39                                                               \\ \hline
\multicolumn{1}{|c|}{\multirow{3}{*}{\textbf{\begin{tabular}[c]{@{}c@{}}Average\\ time\end{tabular}}}}     & \multicolumn{1}{l|}{\textbf{Access}}                                                         & \multicolumn{1}{l|}{$\overline{t_a}$} & min                                                & \multicolumn{1}{c|}{0.00}           & \multicolumn{1}{c|}{0.00}           & \multicolumn{1}{c|}{2.52}                                                          & \multicolumn{1}{c|}{3.80}                                                         & 4.79                                                                & \multicolumn{1}{c|}{0.00}           & \multicolumn{1}{c|}{0.00}           & \multicolumn{1}{c|}{4.39}                                                          & \multicolumn{1}{c|}{5.37}                                                         & 6.18                                                                \\ \cline{2-14} 
\multicolumn{1}{|c|}{}                                                                                     & \multicolumn{1}{l|}{\textbf{Waiting}}                                                        & \multicolumn{1}{l|}{$\overline{t_w}$} & min                                                & \multicolumn{1}{c|}{1.38}           & \multicolumn{1}{c|}{2.29}           & \multicolumn{1}{c|}{4.97}                                                          & \multicolumn{1}{c|}{6.11}                                                         & 7.25                                                                & \multicolumn{1}{c|}{1.38}           & \multicolumn{1}{c|}{2.29}           & \multicolumn{1}{c|}{6.67}                                                          & \multicolumn{1}{c|}{7.97}                                                         & 9.20                                                                \\ \cline{2-14} 
\multicolumn{1}{|c|}{}                                                                                     & \multicolumn{1}{l|}{\textbf{Riding}}                                                         & \multicolumn{1}{l|}{$\overline{t_t}$} & min                                                & \multicolumn{1}{c|}{15.37}          & \multicolumn{1}{c|}{16.66}          & \multicolumn{1}{c|}{16.17}                                                         & \multicolumn{1}{c|}{15.06}                                                        & 14.27                                                               & \multicolumn{1}{c|}{15.37}          & \multicolumn{1}{c|}{16.66}          & \multicolumn{1}{c|}{14.56}                                                         & \multicolumn{1}{c|}{13.88}                                                        & 13.49                                                               \\ \hline
\multicolumn{1}{|c|}{\multirow{3}{*}{\textbf{\begin{tabular}[c]{@{}c@{}}Time\\ std. dev.\end{tabular}}}}   & \multicolumn{1}{l|}{\textbf{Access}}                                                         & \multicolumn{1}{l|}{$\sigma_{t,a}$}   & min                                                & \multicolumn{1}{c|}{0.00}           & \multicolumn{1}{c|}{0.00}           & \multicolumn{1}{c|}{2.12}                                                          & \multicolumn{1}{c|}{2.60}                                                         & 2.92                                                                & \multicolumn{1}{c|}{0.00}           & \multicolumn{1}{c|}{0.00}           & \multicolumn{1}{c|}{2.79}                                                          & \multicolumn{1}{c|}{3.09}                                                         & 3.31                                                                \\ \cline{2-14} 
\multicolumn{1}{|c|}{}                                                                                     & \multicolumn{1}{l|}{\textbf{Waiting}}                                                        & \multicolumn{1}{l|}{$\sigma_{t,w}$}   & min                                                & \multicolumn{1}{c|}{0.80}           & \multicolumn{1}{c|}{1.32}           & \multicolumn{1}{c|}{2.87}                                                          & \multicolumn{1}{c|}{3.52}                                                         & 4.19                                                                & \multicolumn{1}{c|}{0.80}           & \multicolumn{1}{c|}{1.32}           & \multicolumn{1}{c|}{3.85}                                                          & \multicolumn{1}{c|}{4.60}                                                         & 5.31                                                                \\ \cline{2-14} 
\multicolumn{1}{|c|}{}                                                                                     & \multicolumn{1}{l|}{\textbf{Riding}}                                                         & \multicolumn{1}{l|}{$\sigma_{t,r}$}   & min                                                & \multicolumn{1}{c|}{9.63}           & \multicolumn{1}{c|}{10.42}          & \multicolumn{1}{c|}{10.41}                                                         & \multicolumn{1}{c|}{9.73}                                                         & 9.10                                                                & \multicolumn{1}{c|}{9.63}           & \multicolumn{1}{c|}{10.42}          & \multicolumn{1}{c|}{9.35}                                                          & \multicolumn{1}{c|}{8.69}                                                         & 8.13                                                                \\ \hline
\multicolumn{1}{|c|}{\multirow{8}{*}{\textbf{Cost}}}                                                       & \multicolumn{1}{l|}{\textbf{Access}}                                                         & \multicolumn{1}{l|}{$c_a$}            & \$                                                 & \multicolumn{1}{c|}{0.00}           & \multicolumn{1}{c|}{0.00}           & \multicolumn{1}{c|}{111.00}                                                        & \multicolumn{1}{c|}{167.00}                                                       & 210.88                                                              & \multicolumn{1}{c|}{0.00}           & \multicolumn{1}{c|}{0.00}           & \multicolumn{1}{c|}{193.08}                                                        & \multicolumn{1}{c|}{236.06}                                                       & 271.79                                                              \\ \cline{2-14} 
\multicolumn{1}{|c|}{}                                                                                     & \multicolumn{1}{l|}{\textbf{Waiting}}                                                        & \multicolumn{1}{l|}{$c_w$}            & \$                                                 & \multicolumn{1}{c|}{45.70}          & \multicolumn{1}{c|}{75.64}          & \multicolumn{1}{c|}{163.88}                                                        & \multicolumn{1}{c|}{201.48}                                                       & 239.28                                                              & \multicolumn{1}{c|}{45.70}          & \multicolumn{1}{c|}{75.64}          & \multicolumn{1}{c|}{220.21}                                                        & \multicolumn{1}{c|}{263.05}                                                       & 303.48                                                              \\ \cline{2-14} 
\multicolumn{1}{|c|}{}                                                                                     & \multicolumn{1}{l|}{\textbf{\begin{tabular}[c]{@{}l@{}}Riding\\ (x-dir.)\end{tabular}}}      & \multicolumn{1}{l|}{$c_{t,x}$}        & \$                                                 & \multicolumn{1}{c|}{294.80}         & \multicolumn{1}{c|}{294.80}         & \multicolumn{1}{c|}{294.80}                                                        & \multicolumn{1}{c|}{294.80}                                                       & 294.80                                                              & \multicolumn{1}{c|}{294.80}         & \multicolumn{1}{c|}{294.80}         & \multicolumn{1}{c|}{294.80}                                                        & \multicolumn{1}{c|}{294.80}                                                       & 294.80                                                              \\ \cline{2-14} 
\multicolumn{1}{|c|}{}                                                                                     & \multicolumn{1}{l|}{\textbf{\begin{tabular}[c]{@{}l@{}}Riding\\ (y-dir.)\end{tabular}}}      & \multicolumn{1}{l|}{$c_{t,y}$}        & \$                                                 & \multicolumn{1}{c|}{43.33}          & \multicolumn{1}{c|}{71.72}          & \multicolumn{1}{c|}{60.94}                                                         & \multicolumn{1}{c|}{36.60}                                                        & 19.07                                                               & \multicolumn{1}{c|}{43.33}          & \multicolumn{1}{c|}{71.72}          & \multicolumn{1}{c|}{25.56}                                                         & \multicolumn{1}{c|}{10.50}                                                        & 2.07                                                                \\ \cline{2-14} 
\multicolumn{1}{|c|}{}                                                                                     & \multicolumn{1}{l|}{\textbf{\begin{tabular}[c]{@{}l@{}}Operational\\ (x-dir.)\end{tabular}}} & \multicolumn{1}{l|}{$c_{o,x}$}        & \$                                                 & \multicolumn{1}{c|}{359.23}         & \multicolumn{1}{c|}{223.44}         & \multicolumn{1}{c|}{112.33}                                                        & \multicolumn{1}{c|}{98.85}                                                        & 98.69                                                               & \multicolumn{1}{c|}{538.84}         & \multicolumn{1}{c|}{335.17}         & \multicolumn{1}{c|}{125.39}                                                        & \multicolumn{1}{c|}{113.58}                                                       & 116.71                                                              \\ \cline{2-14} 
\multicolumn{1}{|c|}{}                                                                                     & \multicolumn{1}{l|}{\textbf{\begin{tabular}[c]{@{}l@{}}Operational\\ (y-dir.)\end{tabular}}} & \multicolumn{1}{l|}{$c_{o,y}$}        & \$                                                 & \multicolumn{1}{c|}{52.80}          & \multicolumn{1}{c|}{54.36}          & \multicolumn{1}{c|}{37.08}                                                         & \multicolumn{1}{c|}{28.04}                                                        & 22.02                                                               & \multicolumn{1}{c|}{79.19}          & \multicolumn{1}{c|}{81.54}          & \multicolumn{1}{c|}{31.07}                                                         & \multicolumn{1}{c|}{19.72}                                                        & 9.67                                                                \\ \cline{2-14} 
\multicolumn{1}{|c|}{}                                                                                     & \multicolumn{1}{l|}{\textbf{Vehicle}}                                                        & \multicolumn{1}{l|}{$c_v$}            & \$                                                 & \multicolumn{1}{c|}{148.87}         & \multicolumn{1}{c|}{137.21}         & \multicolumn{1}{c|}{139.53}                                                        & \multicolumn{1}{c|}{167.99}                                                       & 185.61                                                              & \multicolumn{1}{c|}{223.30}         & \multicolumn{1}{c|}{205.82}         & \multicolumn{1}{c|}{148.23}                                                        & \multicolumn{1}{c|}{180.19}                                                       & 200.23                                                              \\ \cline{2-14} 
\multicolumn{1}{|c|}{}                                                                                     & \multicolumn{1}{l|}{\textbf{Total}}                                                          & \multicolumn{1}{l|}{$c$}              & \$                                                 & \multicolumn{1}{c|}{944.71}         & \multicolumn{1}{c|}{649.66}         & \multicolumn{1}{c|}{919.55}                                                        & \multicolumn{1}{c|}{994.77}                                                       & 1070.36                                                             & \multicolumn{1}{c|}{1225.16}        & \multicolumn{1}{c|}{1064.68}        & \multicolumn{1}{c|}{1038.35}                                                       & \multicolumn{1}{c|}{1117.89}                                                      & 1198.76                                                             \\ \hline
\end{tabular}
\end{table}
\end{landscape}

\section{Case Study}\label{sec:case_study}
To investigate the applicability of the described semi-on-demand hybrid routes in transit feeders, this case study makes use of a real-world transit network and demand data in the Chicago metropolitan area. We analytically classify areas served by feeder routes to each railway station into fixed route and flexible route service areas with analytical formula in Section~\ref{sec:op_x_f_s}. While actual real-world feeder designs would benefit from case-by-case in-depth analysis and simulation, this case study aims to provide insights how transit feeders may emerge with SAVs and semi-on-demand concept.

\subsection{Data Preparation and Geospatial Processing}

Demand data are adapted from the activity-based demand model CT-RAMP from Chicago Metropolitan Agency for Planning \citep{parsons_brinckerhoff_activity-based_2011}, where trips originate and end at micro analysis zones (MAZs). Only trips assigned to trains and originating outside walking distance are included, which equates to approximately 78,000 hourly trips after aggregation into three peak hours daily. The existing train stations of the commuter rail, Metra, \citep{city_of_chicago_metra_2012} and rapid transit system, Chicago “L”, \citep{chicago_transit_authority_cta_2022-1} are imported into the Geographic Information System (GIS) model. 

To apply the proposed analytical model, which assumes a directional corridor structure, to the complex two-dimensional network, a multi-step geospatial processing framework is implemented using Python. The processing pipeline involves the following steps:

\begin{enumerate}
    \item \textbf{Station Catchment Partitioning:} First, a Voronoi diagram is constructed based on existing Metra and CTA rail stations to define the catchment area of each station. The demand points (MAZs) are then assigned to their nearest station.
    
    \item \textbf{Corridor Axis Definition:} To map the two-dimensional Voronoi polygons onto the linear model ($x$-axis) described in Section~\ref{sec:math_form}, a principal axis is defined for each station zone. The line between the station to the furthest MAZ by Euclidean distance within the zone is established as the route's $x$-axis (representing the route length $L_x$).
    
    \item \textbf{Demand Projection and Separation:} All other MAZs within the zone are mapped relative to the principal axis. MAZs are grouped into two subzones based on their locations relative to the y-axis (perpendicular to the principal axis). Each sub-zone is segmented into one or more directional service corridors.
    
    \item \textbf{Parameter Extraction:} The x- and y-axis distances of each MAZ are calculated with respect to the defined corridor. The demand density $f(x)$ is derived by aggregating trips from MAZs at their x-coordinates. The detour/access distance ($y$) is approximated based on the catchment width perpendicular to the axis at given $x$ intervals.
    
    \item \textbf{Classification:} Finally, the optimal cumulative demand threshold (the number of passengers to serve in flexible route portions) $F(x_f^*)$ derived in Eq.~\ref{eq:F_gen} is applied to the demand profile of each corridor to determine the cutoff point between the flexible and fixed service areas.
\end{enumerate}

We carry out the analysis in each feeder corridor with the optimal cumulative demand threshold, which suggests serving these passengers with flexible routes would not only reduce their travel costs, but also total costs of users and operators. Instead of a specific distribution assumption (e.g., uniform or triangular), this analysis directly captures $f(x)$ from the number of trips in each MAZ (at its center). 

The parameters are set as follows: the maximum access time is 15~min and the walking speed is 4~km/h, resulting in a catchment width of 2~km and average detour $\overline{d_{d,y}}$ of 0.67~km (assuming a uniform distribution of y-directional demand); other parameters follow the previous numerical example in Section~\ref{sec:num_eg_fleet}. The optimal hourly demand to be served within the flexible portion $F(x_f^*)$ is calculated as 35.5~pax/h with Eq.~\ref{eq:F_gen}, equivalent to around 9~pax/trip.

\subsection{Results and Analysis}

Figure~\ref{fig:voronoi_result} shows which MAZs are served with fixed routes (blue dots) and flexible routes (in green dots) in each zone (in gray boundary). The flexible route service areas are mostly found in suburbs and rural areas far from downtown, with sparser demand and bigger gaps between transit lines.

\begin{figure}[htbp]
  \centering
  \includegraphics[width=5.5in]{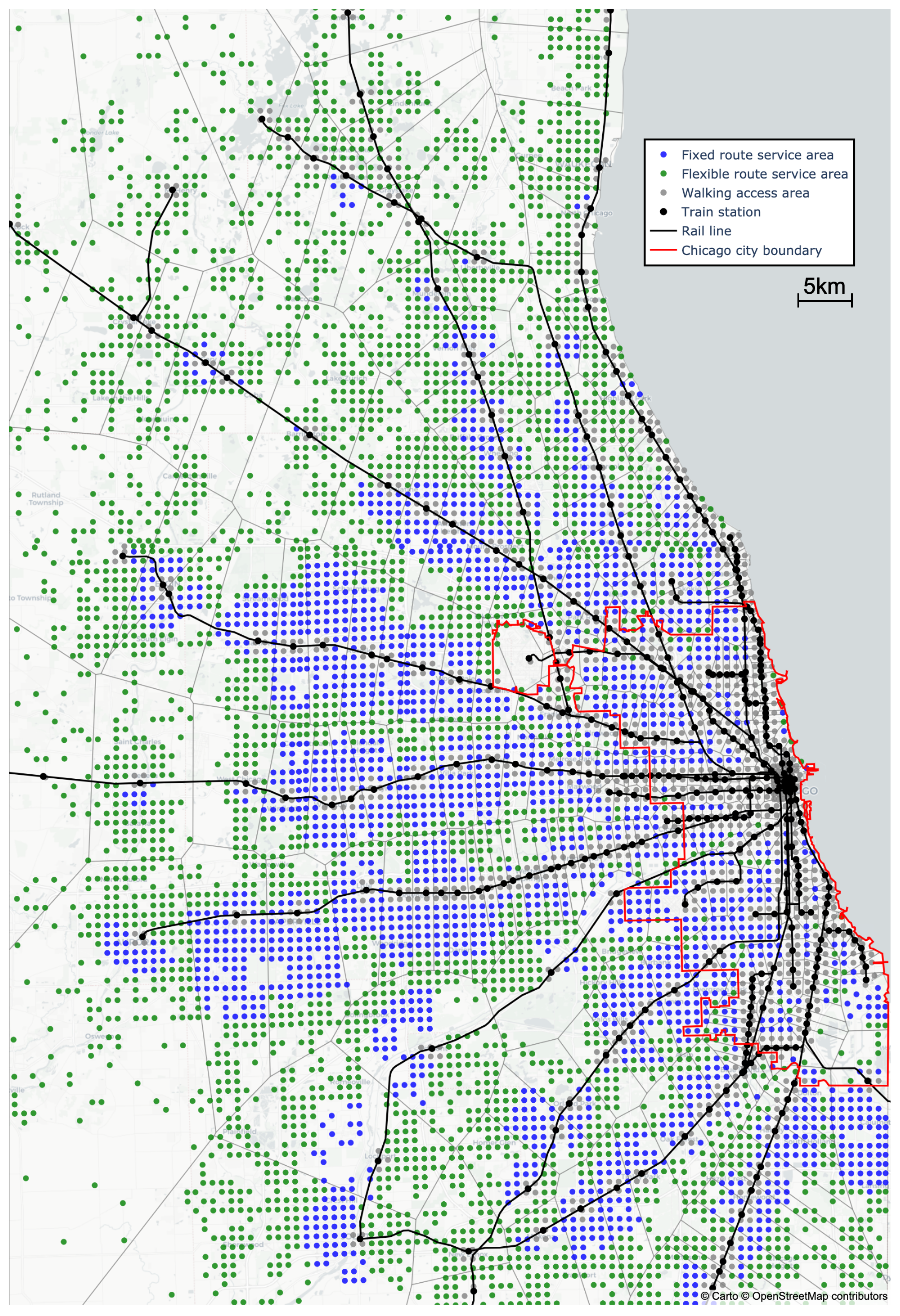}
  \caption{Case Study Results of Fixed Route and Flexible Route Service Area for Feeders to Train Stations \citep{chicago_transit_authority_cta_2022-1, city_of_chicago_metra_2012}}
  \label{fig:voronoi_result}
\end{figure}

% Please add the following required packages to your document preamble:
% \usepackage{lscape}
\begin{landscape}
\begin{table}[]
\caption{Case Study Results of Fixed Route and Flexible Route Service Area for Feeders to Train Stations}
\label{tab:feeder_result}
\begin{tabular}{|ll|ccc|ccc|ccc|}
\hline
\multicolumn{2}{|l|}{\textbf{Service area}}                                                                        & \multicolumn{3}{c|}{\textbf{Among all feeders}}                                                                                                                                                                                                                   & \multicolumn{3}{c|}{\textbf{\begin{tabular}[c]{@{}c@{}}Among semi-on-demand \\hybrid route feeders\end{tabular}}}                                                                                                                                           & \multicolumn{3}{c|}{\textbf{\begin{tabular}[c]{@{}c@{}}Among MAZs in flexible\\ route service areas\end{tabular}}}                                                                                                                                                \\ \hline
\multicolumn{2}{|l|}{\textbf{Service mode}}                                                                        & \multicolumn{1}{c|}{\textbf{\begin{tabular}[c]{@{}c@{}}Fixed\\ route\end{tabular}}} & \multicolumn{1}{c|}{\textbf{\begin{tabular}[c]{@{}c@{}}Semi-on-\\ demand\\ route\end{tabular}}} & \textbf{\begin{tabular}[c]{@{}c@{}}Percent-\\ age \\ change\end{tabular}} & \multicolumn{1}{c|}{\textbf{\begin{tabular}[c]{@{}c@{}}Fixed\\ route\end{tabular}}} & \multicolumn{1}{c|}{\textbf{\begin{tabular}[c]{@{}c@{}}Semi-on-\\ demand\\ route\end{tabular}}} & \textbf{\begin{tabular}[c]{@{}c@{}}Percent-\\ age \\ change\end{tabular}} & \multicolumn{1}{c|}{\textbf{\begin{tabular}[c]{@{}c@{}}Fixed\\ route\end{tabular}}} & \multicolumn{1}{c|}{\textbf{\begin{tabular}[c]{@{}c@{}}Semi-on-\\ demand\\ route\end{tabular}}} & \textbf{\begin{tabular}[c]{@{}c@{}}Percent-\\ age \\ change\end{tabular}} \\ \hline
\multicolumn{2}{|l|}{\textbf{Number of feeder routes}}                                                             & \multicolumn{2}{c|}{909}                                                                                                                                                              & N/A                                                                       & \multicolumn{2}{c|}{787}                                                                                                                                                              & N/A                                                                       & \multicolumn{2}{c|}{787}                                                                                                                                                              & N/A                                                                       \\ \hline
\multicolumn{2}{|l|}{\textbf{Number of MAZs covered}}                                                              & \multicolumn{2}{c|}{5,732}                                                                                                                                                            & N/A                                                                       & \multicolumn{2}{c|}{5,479}                                                                                                                                                            & N/A                                                                       & \multicolumn{2}{c|}{3,562}                                                                                                                                                            & N/A                                                                       \\ \hline
\multicolumn{1}{|l|}{\textbf{\begin{tabular}[c]{@{}l@{}}Number of\\ passengers\end{tabular}}}     & \textbf{pax/h} & \multicolumn{2}{c|}{78,237}                                                                                                                                                           & N/A                                                                       & \multicolumn{2}{c|}{49,921}                                                                                                                                                           & N/A                                                                       & \multicolumn{2}{c|}{18,989}                                                                                                                                                           & N/A                                                                       \\ \hline
\multicolumn{1}{|l|}{\textbf{Average access time}}                                                & \textbf{min}   & \multicolumn{1}{c|}{7.50}                                                           & \multicolumn{1}{c|}{5.68}                                                                       & -24.3\%                                                                  & \multicolumn{1}{c|}{7.50}                                                           & \multicolumn{1}{c|}{4.65}                                                                       & -38.0\%                                                                  & \multicolumn{1}{c|}{7.50}                                                           & \multicolumn{1}{c|}{0.00}                                                                       & -100.0\%                                                                  \\ \hline
\multicolumn{1}{|l|}{\textbf{Average waiting time}}                                               & \textbf{min}   & \multicolumn{1}{c|}{7.50}                                                           & \multicolumn{1}{c|}{7.50}                                                                       & 0.0\%                                                                     & \multicolumn{1}{c|}{7.50}                                                           & \multicolumn{1}{c|}{7.50}                                                                       & 0.0\%                                                                     & \multicolumn{1}{c|}{7.50}                                                           & \multicolumn{1}{c|}{7.50}                                                                       & 0.0\%                                                                     \\ \hline
\multicolumn{1}{|l|}{\textbf{Average riding time}}                                                & \textbf{min}   & \multicolumn{1}{c|}{4.95}                                                           & \multicolumn{1}{c|}{6.12}                                                                       & 23.6\%                                                                    & \multicolumn{1}{c|}{5.96}                                                           & \multicolumn{1}{c|}{7.79}                                                                       & 30.8\%                                                                    & \multicolumn{1}{c|}{7.93}                                                           & \multicolumn{1}{c|}{12.75}                                                                      & 60.8\%                                                                    \\ \hline
\multicolumn{1}{|l|}{\textbf{Average user cost}}                                                & \textbf{\$}    & \multicolumn{1}{c|}{8.58}                                                           & \multicolumn{1}{c|}{7.90}                                                                       & -7.9\%                                                                    & \multicolumn{1}{c|}{8.86}                                                           & \multicolumn{1}{c|}{7.79}                                                                       & -12.0\%                                                                   & \multicolumn{1}{c|}{9.40}                                                           & \multicolumn{1}{c|}{6.60}                                                                       & -29.8\%                                                                   \\ \hline
\multicolumn{1}{|l|}{\textbf{\begin{tabular}[c]{@{}l@{}}Average operator\\ cost\end{tabular}}}  & \textbf{\$}    & \multicolumn{1}{c|}{0.32}                                                           & \multicolumn{1}{c|}{0.47}                                                                       & 45.6\%                                                                    & \multicolumn{1}{c|}{0.47}                                                           & \multicolumn{1}{c|}{0.69}                                                                       & 49.0\%                                                                    & \multicolumn{1}{c|}{N/A}                                                            & \multicolumn{1}{c|}{N/A}                                                                        & N/A                                                                       \\ \hline
\multicolumn{1}{|l|}{\textbf{\begin{tabular}[c]{@{}l@{}}Average\\ generalized cost\end{tabular}}} & \textbf{\$}    & \multicolumn{1}{c|}{8.90}                                                           & \multicolumn{1}{c|}{8.37}                                                                       & -6.0\%                                                                    & \multicolumn{1}{c|}{9.32}                                                           & \multicolumn{1}{c|}{8.49}                                                                       & -9.0\%                                                                    & \multicolumn{1}{c|}{N/A}                                                            & \multicolumn{1}{c|}{N/A}                                                                        & N/A                                                                       \\ \hline
\multicolumn{1}{|l|}{\textbf{Total access cost}}                                                  & \textbf{\$}    & \multicolumn{1}{c|}{322,731}                                                        & \multicolumn{1}{c|}{244,401}                                                                    & -24.3\%                                                                   & \multicolumn{1}{c|}{205,924}                                                        & \multicolumn{1}{c|}{127,594}                                                                    & -38.0\%                                                                   & \multicolumn{1}{c|}{78,330}                                                         & \multicolumn{1}{c|}{0}                                                                          & -100.0\%                                                                  \\ \hline
\multicolumn{1}{|l|}{\textbf{Total waiting cost}}                                                 & \textbf{\$}    & \multicolumn{1}{c|}{242,048}                                                        & \multicolumn{1}{c|}{242,048}                                                                    & 0.0\%                                                                     & \multicolumn{1}{c|}{154,443}                                                        & \multicolumn{1}{c|}{154,443}                                                                    & 0.0\%                                                                     & \multicolumn{1}{c|}{58,747}                                                         & \multicolumn{1}{c|}{58,747}                                                                     & 0.0\%                                                                     \\ \hline
\multicolumn{1}{|l|}{\textbf{Total riding cost}}                                                  & \textbf{\$}    & \multicolumn{1}{c|}{106,449}                                                        & \multicolumn{1}{c|}{131,619}                                                                    & 23.6\%                                                                    & \multicolumn{1}{c|}{81,780}                                                         & \multicolumn{1}{c|}{106,950}                                                                    & 30.8\%                                                                    & \multicolumn{1}{c|}{41,404}                                                         & \multicolumn{1}{c|}{66,574}                                                                     & 60.8\%                                                                    \\ \hline
\multicolumn{1}{|l|}{\textbf{Total user cost}}                                                  & \textbf{\$}    & \multicolumn{1}{c|}{671,228}                                                        & \multicolumn{1}{c|}{618,068}                                                                    & -7.9\%                                                                    & \multicolumn{1}{c|}{442,147}                                                        & \multicolumn{1}{c|}{388,987}                                                                    & -12.0\%                                                                   & \multicolumn{1}{c|}{178,481}                                                        & \multicolumn{1}{c|}{125,321}                                                                    & -29.8\%                                                                   \\ \hline
\multicolumn{1}{|l|}{\textbf{Total operating cost}}                                               & \textbf{\$}    & \multicolumn{1}{c|}{9,849}                                                          & \multicolumn{1}{c|}{16,179}                                                                     & 64.3\%                                                                    & \multicolumn{1}{c|}{9,414}                                                          & \multicolumn{1}{c|}{15,744}                                                                     & 67.2\%                                                                    & \multicolumn{1}{c|}{N/A}                                                            & \multicolumn{1}{c|}{N/A}                                                                        & N/A                                                                       \\ \hline
\multicolumn{1}{|l|}{\textbf{Total vehicle cost}}                                                 & \textbf{\$}    & \multicolumn{1}{c|}{15,151}                                                         & \multicolumn{1}{c|}{20,215}                                                                     & 33.4\%                                                                    & \multicolumn{1}{c|}{13,827}                                                         & \multicolumn{1}{c|}{18,891}                                                                     & 36.6\%                                                                    & \multicolumn{1}{c|}{N/A}                                                            & \multicolumn{1}{c|}{N/A}                                                                        & N/A                                                                       \\ \hline
\multicolumn{1}{|l|}{\textbf{Total operator cost}}                                              & \textbf{\$}    & \multicolumn{1}{c|}{25,000}                                                         & \multicolumn{1}{c|}{36,394}                                                                     & 45.6\%                                                                    & \multicolumn{1}{c|}{23,241}                                                         & \multicolumn{1}{c|}{34,635}                                                                     & 49.0\%                                                                    & \multicolumn{1}{c|}{N/A}                                                            & \multicolumn{1}{c|}{N/A}                                                                        & N/A                                                                       \\ \hline
\multicolumn{1}{|l|}{\textbf{\begin{tabular}[c]{@{}l@{}}Total generalized\\ cost\end{tabular}}}   & \textbf{\$}    & \multicolumn{1}{c|}{696,228}                                                        & \multicolumn{1}{c|}{654,462}                                                                    & -6.0\%                                                                    & \multicolumn{1}{c|}{465,388}                                                        & \multicolumn{1}{c|}{423,622}                                                                    & -9.0\%                                                                    & \multicolumn{1}{c|}{N/A}                                                            & \multicolumn{1}{c|}{N/A}                                                                        & N/A                                                                       \\ \hline
\end{tabular}
\end{table}
\end{landscape}

The detailed results are shown in Table~\ref{tab:feeder_result}. 787 routes (86.6\% of the total) are semi-on-demand hybrid routes, serving 3,562 MAZs (62.1\%) with flexible routes. However, this accounts for only 18,989~pax/h (24.3\%). The large coverage area accounting for a small portion of demand highlights the use case of flexible feeders for low-demand-density areas. Implementing these hybrid routes leads to an estimated 49.0\% increase in operator costs, primarily due to increased vehicle mileage from detours and potentially larger fleet requirements. This, however, is a small portion of the total costs, based on the SAV cost forecast. Higher operating and vehicle costs would reduce $x_f^*$ and the extent of flexible routes. 

While hybrid routes only save 7.9\% user cost and 6.0\% total cost in general, they make significant differences among travelers who use the flexible route portion by eliminating the access cost, accounting for a saving of 29.8\% user cost. This enhances the attractiveness of transit systems to travelers in these zones who are usually further from the train stations.\footnote{The benefits brought by mode shift to transit are however not captured in this study.}

\section{Conclusion}\label{sec:concl}

\subsection{Summary}

This study considers semi-on-demand hybrid route service in public transit systems. It serves directional demand by first offering passengers further from a transit station/downtown with on-demand flexible route service and then continuing with a traditional fixed route. It combines the economies of scale of fixed route bus service and the accessibility and flexibility of taxi and shared autonomous vehicles (SAVs).

We develop an analytical approach to delineate the conditions in which each of the route forms (fixed, hybrid, and flexible) is optimal, with tractable cost expressions that consider the total costs of users (access, waiting, and riding) and operators (operating and vehicle) in two formulations to support respectively strategic and tactical decisions. Closed-form expressions are derived to determine the optimal flexible route portion and fleet size for hybrid routes, and consider the optimal headway with variable vehicle sizes. Through numerical examples and a case study in the Chicago metropolitan area, we demonstrate the benefits and applications of semi-on-demand feeders. 

The findings provide insights into the choice between fixed and flexible route service. While the model is technology-neutral, the lower operating cost brought by SAVs strongly favor more flexible route service. We also demonstrate the benefits of hybrid routes in transit networks. Serving passengers located further away with flexible routes lowers total costs, particularly user costs in access, which could attract more riders to connect with the main transit system. Besides, demand gradients favor longer flexible routes, a good fit for cities with urban sprawl. 

In the example of the joint design of flexible route portion, headway, fleet size, and vehicle size, vans operating fully flexible routes emerged as the system-optimal solution under forecasted SAV costs. However, minibuses with hybrid routes still play a key role when SAV costs are still high and the operating budget is limited, or during transitions to smaller vehicles and flexible routes to introduce the on-demand service and enhance service attractiveness. These are significant insights for transit agencies to plan investments in terms of fleet and vehicle sizes for their network at different levels of SAV operating costs. Our case study also identifies flexible route areas in suburbs and rural areas further from the station and with coverage gaps, helping to serve more riders.

\subsection{Limitations and Future Research Directions}

This study develops an analysis tool for semi-on-demand hybrid route services with tractable and closed-form solutions. The resulting analytical formulation of cost and optimal flexible route portion and fleet size can aid transit agencies in planning new routes and making vehicle investment decisions in the era of SAVs. It provides a continuous approximation alternative to computationally intensive simulation-based optimization for research in large-scale transit network design with SAVs. While yielding practical results for easy adoption by policymakers and researchers in a more sophisticated simulation-optimization framework, it is inherently limited by several simplifying assumptions, including directional and inelastic demand, a binary fixed/flexible service model along the route, no vehicle backtracking, and idealized headway adherence. The formulation allows future refinement and investigation in the following areas.

First, it can be combined with demand- and supply-side models in the transit network design problem. Demand models to investigate mode shift towards the transit system with improved first-mile-last-mile connectivity, and its competition with driving. For the supply side, a comprehensive joint transit network design model could leverage the analytical expressions derived here to evaluate network-wide implications of SAV integration, considering inter-route synergies and competition. While this model assumes a static demand, it can be readily applied to problems with multiple horizons and time-varying design parameters and demand. The methodology can also be generalized to other network structures by transforming the space with the x-axis as the shortest path and the y-axis as the detours.

Second, research can focus on the study and forecast of connected and automated vehicles, including cost parameters. Our numerical example and sensitivity study illustrate the attractiveness of vehicles larger than sedans to provide hybrid route feeder services, and vehicle size matters in hybrid route design and headway-fleet size decisions. Following the analytical approach developed in this paper, more accurate cost forecasts would facilitate new service planning and deployment.

Third, research with agent-based simulation models and fleet control can verify the model results at a microscopic level and assess the impacts of several assumptions and design parameters. For example, the formulation assumes sequential drop-offs before pick-ups, constant stop spacing for the fixed route portion, and at most one flexible and fixed route portion. It can be extended to a more complex optimization problem with endogenous stop spacing optimization and multiple flexible route portions. Backtracking is not allowed in the model, so solving the vehicle routing problem that optimizes pick-ups/drop-offs could improve the performance of the flexible route portion. A constant headway is assumed in each case, whereas simulations that account for the effects of demand and service variances on waiting time could assess schedule adherence. Further demand elasticity consideration, such as with walking distance, would better capture riders' choice between flexible and fixed route services. Vehicle capacity buffer allows for stochastic demand, with room for future statistical or simulation approaches to evaluate the effects of vehicle size and demand fluctuation on boarding rejection. Lastly, door-to-door service assumed within the flexible route portion may be expanded to incorporate the concept of meeting points that would likely shift the optimal solution towards larger flexible route portions by reducing the marginal cost of detours.

\section*{Final Remark}
The authors dedicate this paper to the memory of our co-author, friend, and mentor, Professor Hani S. Mahmassani, whose guidance was instrumental in the conception and execution of this research.% The paper was finished and approved by Prof. Mahmassani before he passed away.

\section*{Acknowledgments}
The authors would like to thank Antonios Tsakarestos of the Technical University of Munich for the valuable comments in the initial stage of the study.
% Leave this (end of acknowledgment)

\section*{Author Contributions}
Conceptualization: MN, HSM;  Formal analysis: MN, FD, HSM, KB; Methodology: MN, FD, HSM; Supervision: HSM, KB; Visualization: MN; Writing – original draft: MN, FD; Writing – review and editing: HSM, KB

\section*{Funding Sources}
This research did not receive any specific grant from funding agencies in the public, commercial, or not-for-profit sectors.

% References here (outcomment the appropriate case) 

% CASE 1: BiBTeX used to constantly update the references 
%   (while the paper is being written).
\bibliographystyle{elsarticle-harv} 
\bibliography{references} % if more than one, comma separated

% CASE 2: BiBTeX used to generate mypaper.bbl (to be further fine tuned)
%\input{mypaper.bbl} % outcomment this line in Case 2

\end{document}